\newcommand{\lar}[1]{\mathlarger{\mathlarger{\mathlarger{#1}}}}
\newcommand{\J}{\mathsf{J}}
\newcommand{\K}{\mathsf{K}}
\begin{document}
\begin{frontmatter}
  \title{Implicative Models of Set Theory}
  \author{Samuele Maschio\thanksref{a}\thanksref{myemail}}
  \author{Alexandre Miquel\thanksref{b}\thanksref{myemail2}}	 
  \address[a]{Dipartimento di Matematica ``Tullio Levi-Civita''\\
    Universit\`a di Padova, Padova, Italy}  
  \address[b]{Instituto de Matemática y Estadística ``Rafael Laguardia'',\\
    Facultad de Ingeniería, Universidad de la República, Montevideo, Uruguay.}
   \thanks[myemail]{Email: \href{maschio@math.unipd.it}
     {\texttt{\normalshape maschio@math.unipd.it}}} 
   \thanks[myemail2]{Email: \href{amiquel@fing.edu.uy}
     {\texttt{\normalshape amiquel@fing.edu.uy}}} 
\begin{abstract} In this paper we show that using implicative algebras one can produce models of set theory generalizing Heyting/Boolean-valued models and realizability models of $\mathbf{(I)ZF}$, both in intuitionistic and classical logic. This has as consequence that any topos which is obtained from a $\mathbf{Set}$-based tripos as the result of the tripos-to-topos construction hosts a model of intuitionistic or classical set theory, provided a
large enough strongly inaccessible cardinal exists.
\end{abstract}
\begin{keyword}Implicative algebra, 
intuitionistic and classical set theory, tripos.
\end{keyword}
\end{frontmatter}

\section{Introduction} Implicative algebras were introduced by the second author \cite{miq1} in order to provide a common foundation for the model-theoretic constructions underlying forcing and realizability. 

Forcing was first introduced by Cohen \cite{COH} in the 60's to prove the independence of the Continuum Hypothesis with respect to $\mathbf{ZFC}$ and it is the main technique used in set theory to obtain relative consistency results. From an algebraic point of view, the technique of forcing amounts to the construction of a Boolean-valued model of the considered theory, and this construction can be further generalized to intuitionistic theories by considering Heyting-valued models. On the other hand, realizability has been introduced by Kleene \cite{KLE} in the 40's to interpret the computational content of intuitionistic proofs. For a long time, this technique was limited to intuitionistic theories, but in the mid-90's Krivine \cite{KRI} showed how to reformulate its very principles to make them compatible with classical logic.

In order to compare forcing and realizability, Hyland, Johnstone and Pitts introduced the notion of tripos \cite{HJP}, that can be seen as a categorical model of higher-order logic. 
Such triposes may be constructed from complete Heyting algebras, thus yielding forcing triposes, or from partial combinatory algebras, thus yielding intuitionistic realizability triposes. More recently, Streicher \cite{STR13} showed how to turn an abstract Krivine structure into a tripos, thus describing classical realizability in categorical terms. All these triposes $\mathsf{P}$ can then be turned into toposes $\mathbf{Set}[\mathsf{P}]$ by applying the tripos-to-topos construction \cite{HJP}.

In \cite{miq1} the second author showed that forcing and realizability triposes are instances of a more general notion of tripos induced by an implicative algebra (which is called \emph{implicative tripos}). Later, in \cite{miq2}, he proved that every $\mathbf{Set}$-based tripos is in fact (isomorphic to) an implicative tripos. These triposes include those arising from many different variants of realizability, such as modified realizability triposes, relative realizability triposes, classical realizability triposes and so on.

However, it is worth recalling that, from a proof-theoretic perspective, topos theory is much weaker than set theory. Indeed, the internal theory of a topos with a natural numbers object is strictly weaker than (intuitionistic or classical) Zermelo set theory (IZ/Z), that is itself much weaker than (intuitionistic or classical) Zermelo-Fraenkel set-theory (IZF/ZF).
Intuitively, the main difference is that in topos theory, one can only quantify over the elements of a given set or object (i.e.\ bounded quantification), whereas in set theory, one can also quantify over all sets/objects (i.e.\ unbounded quantification).

In this paper we shall see how implicative algebras can be used to construct \emph{implicative} models of intuitionistic and classical set theory (depending on whether the underlying implicative algebra is intuitionistic or classical).
We will then see that our implicative models of set theory encompass Heyting/Boolean-valued models for $\mathbf{(I)ZF}$ \cite{bell,BELL2} and, up to logical equivalence, Friedman/Rosolini/McCarty realizability models for $\mathbf{IZF}$ \cite{Friedman,Rosolini,McCarty} as well as the classical realizability models of $\mathbf{ZF}$ introduced by Krivine \cite{KRIZF1,KRIZF2}.

Finally, in the last section, we will use the relationship between the logic of a tripos $\mathsf{P}$ and the internal logic of the corresponding topos $\mathbf{Set}[\mathsf{P}]$ to show that our implicative models of $\mathbf{(I)ZF}$ can be seen as internal models of set theory in the toposes constructed from
   implicative triposes. Since the second author proved \cite{miq2} that every $\mathbf{Set}$-based tripos is an implicative tripos (up to isomorphism), we can conclude that every topos induced by a $\mathbf{Set}$-based tripos hosts an internal model of set theory (provided a large enough cardinal exists).

\section{Intuitionistic and classical set theory}
In the language of Zermelo-Fraenkel set theory the only terms are variables and there are two binary predicate symbols: equality $=$ and membership $\in$.
As usual in the language of set theory $\forall x\in y\,\varphi$ is a shorthand for $\forall x(x\in y\rightarrow \varphi)$ and $\exists x\in y\,\varphi$ is a shorthand for $\exists x(x\in y\wedge \varphi)$, while $x\subseteq y$ is a shorthand for $\forall z\in x\,(z\in y)$.

The theories $\mathbf{ZF}$ and $\mathbf{IZF}$ have both the following axioms, but the logic underlying the former is classical, while it is intuitionistic for the latter:
\begin{enumerate}
\item[$\mathbf{Ext})$] $\forall x\forall y\,(x\subseteq y\wedge y\subseteq x\rightarrow x=y)$ 
\item[$\mathbf{Pair})$] $\forall x\forall y\exists z\,(x\in z\wedge y\in z)$
\item[$\mathbf{Union})$] $\forall x\exists u\forall y\in x\forall z\in y\,(z\in u)$
\item[$\mathbf{Pow})$] $\forall x\exists z\forall y\,(y\subseteq x\rightarrow y\in z)$
\item[$\mathbf{Inf})$] $\exists u\mathbf{Inf}(u)$ where $\mathbf{Inf}(u)$ is the conjunction of the formulas $\mathbf{Inf}_1(u):\equiv\exists x\in u\forall y\in x\,\bot$ and $\mathbf{Inf}_2(u):\equiv\forall x\in u\exists y\in u(x\subseteq y\wedge x\in y\wedge \forall z\in y(z\in x \vee z=x))$
\item[$\mathbf{Sep}_{\varphi})$] $\forall w_1....\forall w_n\forall x\exists y\left(\forall z\in y\,(z\in x\wedge \varphi)\wedge \forall z\in x\,(\varphi\rightarrow z\in y)\right)$ for all formulas $\varphi[w_1,...w_n,x,z]$  in context.
\item[$\mathbf{Col}_\varphi)$] $\forall w_1....\forall w_n\forall y\left(\forall x\in y \exists z\, \varphi \rightarrow \exists u\forall x\in y\exists z\in u\,\varphi\right)$ for all formulas in context $\varphi[w_1,...w_n,x,y,z]$.
\item[$\mathbf{\in\textrm{-}Ind}_{\varphi})$] $\forall w_1...\forall w_n(\forall x (\forall y\in x\,\varphi[y/x]\rightarrow \varphi)\rightarrow \forall x\,\varphi)$ for all formulas in context $\varphi[w_1,...,w_n,x]$.
\end{enumerate}

\section{Implicative algebras and implicative triposes}\label{s:ImpAlgTrip}
An \emph{implicative algebra} is a $4$-tuple $\mathbb{A}=(A,\leq,\rightarrow,\Sigma)$ where 
\begin{enumerate}
\item $(A,\leq)$ is a complete lattice;
\item $\rightarrow:A\times A\rightarrow A$ is a function which is monotone in the second component and anti-monotone in the first component, and which satisfies the following condition 
$$a\rightarrow \bigwedge_{i\in I}b_i=\bigwedge_{i\in I}\left(a\rightarrow b_i
\right)$$
for every indexed family $(b_i)_{i\in I}$ of elements of $A$ and every $a\in A$;
\item $\Sigma\subseteq A$  is upward closed, it contains $b$ as soon as it contains $a\rightarrow b$ and $a$, and it contains $\mathbf{K}:=\bigwedge_{a,b\in A}(a\rightarrow (b\rightarrow a))$ and $\mathbf{S}:=\bigwedge_{a,b\in A}((a\rightarrow (b\rightarrow c))\rightarrow ((a\rightarrow b)\rightarrow (a\rightarrow c)))$.
\end{enumerate}

Every complete Heyting algebra $(H,\leq)$ with Heyting implication $\rightarrow$ gives rise to an implicative algebra $(H,\leq,\rightarrow,\{\top\})$. Moreover, every total combinatory algebra $(R,\cdot)$ gives rise to an implicative algebra $(\mathcal{P}(R),\subseteq,\Rightarrow, \mathcal{P}(R)\setminus \{\emptyset\})$, where $A\Rightarrow B:=\{r\in R|\; r\cdot a \in B \textrm{ for every }a\in A\}$ for every $A,B\subseteq R$. Other examples can be found in \cite{miq1}. In the case of a partial combinatory algebra $(R,\cdot)$, the $4$-uple $(\mathcal{P}(R),\subseteq,\Rightarrow, \mathcal{P}(R)\setminus \{\emptyset\})$ is not in general an implicative algebra, but a quasi-implicative algebra (see \cite{miq1}). However, there is a standard way to transform it into an implicative algebra in such a way that the tripos one obtains from it is equivalent to the realizability tripos built from $(R,\cdot)$ (for details, see \cite{miq1}).

An implicative algebra $\mathbb{A}=(A,\leq,\rightarrow,\Sigma)$ is \emph{classical} if $\bigwedge_{a,b\in A}(((a\rightarrow b)\rightarrow a)\rightarrow a)\in \Sigma$.
Complete Boolean algebras give rise to classical implicative algebras following the recipe used for complete Heyting algebras.
Last but not least, classical implicative algebras can also be constructed from Abstract Krivine Structures~\cite{STR13}, the algebraic structure underlying classical realizability~\cite{KRI11}.

Closed $\lambda$-terms with constant parameters in an implicative algebra $\mathbb{A}$ can be encoded as elements of $\mathbb{A}$ itself as follows: $a^{\mathbb{A}}:=a$ for every $a\in A$, $(ts)^{\mathbb{A}}:=t^{\mathbb{A}}\cdot s^{\mathbb{A}}$ and $(\lambda x.t)^{\mathbb{A}}:=\bigwedge_{a\in A}\left(a\rightarrow (t[a/x])^{\mathbb{A}}\right)$ where the application $\cdot$ is defined as follows for every $a,b\in A$:
$$a\cdot b:=\bigwedge\{x\in A|\,a\leq b\rightarrow x\}$$
If we define the combinators $\mathbf{k}$ as $\lambda x.\lambda y.x$ and $\mathbf{s}$ as $\lambda x.\lambda y.\lambda z.xz(yz)$ as usual, one can show (see \cite{miq1}) that $\mathbf{K}=\mathbf{k}^{\mathbb{A}}$ and $\mathbf{S}=\mathbf{s}^{\mathbb{A}}$.

Useful properties of the encoding of $\lambda$-terms in $\mathbb{A}$ are the following:
\begin{enumerate}
\item if $t$ $\beta$-reduces to $s$, then $t^{\mathbb{A}}\leq s^{\mathbb{A}}$;
\item if $t$ is a pure $\lambda$-term with free variables $x_1,...,x_n$ and $a_1,...,a_n\in \Sigma$, then $(t[x_1:a_1,...,x_n:a_n])^{\mathbb{A}}\in \Sigma$\footnote{We denote with $t[x_1:a_1,...,x_n:a_n]$ the $\lambda$-term obtained from $t$ by substituting the variables $x_1,...,x_n$ with $a_1,...,a_n$, respectively.}; in particular the encodings of closed pure $\lambda$-terms are elements of $\Sigma$.
\end{enumerate}
In what follows we will remove the superscript $\mathbb{A}$ from the encoding of $\lambda$-terms in order to lighten the notation.

For $a,b\in A$, we write $a\vdash_{\Sigma}b$ if $a\rightarrow b\in \Sigma$, while we write $a\equiv_{\Sigma}b$ if $a\vdash_{\Sigma}b$ and $b\vdash_{\Sigma}a$, moreover for every $a,b\in A$ following \cite{miq1} we define:
$$a\times b:=\bigwedge_{x\in A}\left((a\rightarrow (b\rightarrow x))\rightarrow x\right)$$
$$a+ b:=\bigwedge_{x\in A}\left((a\rightarrow x)\rightarrow ((b\rightarrow x)\rightarrow x)\right)$$
and for every set indexed family $(a_{i})_{i\in I}$ we define
$$\lar{\forall} _{i\in I} a_i:=\bigwedge_{i\in I}a_{i}\qquad\qquad\lar{\exists} _{i\in I} a_i:=\bigwedge_{x\in A}\left(\bigwedge_{i\in I}\left(a_{i}\rightarrow x\right)\rightarrow x\right)$$
Note that the operations $a\times b$ (implicative conjunction) and $a+b$ (implicative disjunction) are in general not associative, commutative or idempotent on the nose (think of intuitionistic or classical realizability), but they clearly are up to the equivalence $\equiv_{\Sigma}$ (logical equivalence modulo the separator~$\Sigma$).
Also note that unlike universal quantifications (that are simply interpreted as meets), existential quantifications are not interpreted here as joins (as one would expect in forcing or in intuitionistic realizability), but they are rather interpreted using the standard second-order encoding of $\exists$ in minimal second-order logic (thus using the same trick as in classical realizability). The reason is that in the framework of implicative algebras (that contains classical realizability as a particular case), joins do not satisfy the elimination rule of existential quantification, except in particular cases that will be discussed in Section~\ref{ss:CaseIntReal}.
However, the price to pay for this encoding is that the corresponding realizers are in general more complex.

We also introduce shorthands for some $\lambda$-terms: $\overline{\mathbf{k}}:=\lambda x.\lambda y.y$, $\mathbf{p}:=\lambda x.\lambda y.\lambda z.zxy$, $\mathbf{p}_1:=\lambda u.u\mathbf{k}$, $\mathbf{p}_2:=\lambda v.v\overline{\mathbf{k}}$, $\mathbf{j}_1:=\lambda x.\lambda z.\lambda w.zx$, $\mathbf{j}_2:=\lambda x.\lambda z.\lambda w.wx$, $\mathbf{e}:=\lambda x.\lambda z.zx$. Notice that (the encodings of) all of them belong to the separator~$\Sigma$.

If $\Gamma$ is a finite list of variable assignments $x_1:a_1,...,x_n:a_n$ with $a_1,...,a_n\in A$ and $x_1,...,x_n$ distinct variables, and $t$ is a $\lambda$-term with parameters in $A$ and free variables among $x_1,..,x_n$, we write $\Gamma\vdash t:a$ as a shorthand for $t[\Gamma]^{\mathbb{A}}\leq a$ (where $t[\Gamma]$ is the result of the substitution corresponding to $\Gamma$ applied to $t$) and the following rules are sound (this is a little variation on the system of rules presented in \cite{miq1}):
\begin{framed}
$$\cfrac{x:a\in \Gamma}{\Gamma\vdash x:a}\;\qquad \cfrac{}{\Gamma\vdash a:a}\qquad \cfrac{\Gamma\vdash t:a\qquad  a\leq b}{\Gamma\vdash t:b}\qquad \cfrac{\Gamma'\leq \Gamma\qquad \Gamma\vdash t:a}{\Gamma'\vdash t:a}$$
$$ $$
$$\cfrac{\Gamma\vdash t:\bot}{\Gamma\vdash t:a}\qquad \cfrac{\Gamma\vdash t:a}{\Gamma\vdash t:\top}\qquad\cfrac{\Gamma\vdash t:a\rightarrow b\qquad \Gamma\vdash s:a}{\Gamma\vdash ts:b}\qquad \cfrac{\Gamma,x:a\vdash t:b }{\Gamma\vdash \lambda x.t:a\rightarrow b}$$
$$ $$
$$\cfrac{\Gamma\vdash t:a\qquad \Gamma\vdash s:b}{\Gamma\vdash \mathbf{p}ts: a\times b}\qquad \cfrac{\Gamma\vdash t:a\times b}{\Gamma\vdash\mathbf{p}_1t:a}\qquad  \cfrac{\Gamma\vdash t:a\times b}{\Gamma\vdash\mathbf{p}_2t:b}$$
$$ $$
$$\cfrac{\Gamma\vdash t:a}{\Gamma\vdash\mathbf{j}_1t:a+b}\;\;\cfrac{\Gamma\vdash t:b}{\Gamma\vdash\mathbf{j}_2t:a+b}\;\; \cfrac{\Gamma\vdash t:a+b\qquad \Gamma, x:a\vdash u:c\qquad\Gamma, y:b\vdash v:c }{\Gamma\vdash t(\lambda x.u)(\lambda y.v):c}$$
$$ $$
$$\cfrac{\Gamma\vdash t:a_i\,(\textrm{for all }i\in I)}{\Gamma\vdash t: \lar{\forall} _{i\in I}a_i}\qquad \cfrac{\Gamma\vdash t: \lar{\forall} _{i\in I}a_i}{\Gamma\vdash t:a_{\overline{i}}}\;\overline{i}\in I$$
$$ $$
$$\cfrac{\Gamma\vdash t:a_{\overline{i}}}{\Gamma\vdash 	\mathbf{e}t:\lar{\exists} _{i\in I}a_i }\; \overline{i}\in I\qquad \cfrac{\Gamma\vdash t:\lar{\exists} _{i\in I}a_i\qquad \Gamma,x:a_i\vdash u:b\,(\textrm{ for all }i\in I)}{\Gamma\vdash t(\lambda x.u):b}$$
\end{framed}
where $\Gamma'\leq \Gamma$ means that for every declaration $x:a$ in $\Gamma$ we have $x:b$ in $\Gamma'$ for some $b\leq a$.

As shown in \cite{miq1}, to every implicative algebra $\mathbb{A}$ can be associated a tripos (see \cite{HJP} or \cite{VOO08}) 
$$\mathsf{P}_{\mathbb{A}}:\mathbf{Set}^{op}\rightarrow \mathbf{Heyt}$$
by sending every set $I$ to the posetal reflection of the preordered set $(A^I,\vdash_{\Sigma[I]})$ where $\varphi\vdash_{\Sigma[I]}\psi$ if and only if $\bigwedge_{i\in I}(\varphi(i)\rightarrow \psi(i))\in \Sigma$ (we will write $\varphi\equiv_{\Sigma[I]}\psi$ if $\varphi\vdash_{\Sigma[I]}\psi$ and $\psi\vdash_{\Sigma[I]}\varphi$) and every function $f:I\rightarrow J$ to the function induced by the pre-composition function $(-)\circ f:A^J\rightarrow A^I$.
Componentwise use of $\times$, $+$ and $\rightarrow$ defines a Heyting prealgebra structure (which needs not to be complete) on every preorder $(A^I,\vdash_{\Sigma[I]})$, which is preserved by pre-composition.
$\lar{\exists}$ and $\lar{\forall}$ are used to produce left and right adjoints to reindexing maps satisfying Beck-Chevalley condition, while a generic predicate is given by (the equivalence class of) the identity function on $A$. 

A remarkable result in \cite{miq2} is the following:
\begin{theorem}\label{teomiq}
Let $\mathsf{P}:\mathbf{Set}^{op}\rightarrow \mathbf{Heyt}$ be a tripos. Then, there exists an implicative algebra $\mathbb{A}$ such that $\mathsf{P}$ is isomorphic to $\mathsf{P}_{\mathbb{A}}$.
\end{theorem}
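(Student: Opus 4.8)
The plan is to reconstruct an implicative algebra $\mathbb{A}$ directly from the structure of $\mathsf{P}$, exploiting the defining feature of a tripos: the existence of a \emph{generic predicate}. Concretely, I would fix a set $\Pi$ together with a generic predicate $\mathsf{tr}\in\mathsf{P}(\Pi)$, so that for every set $I$ and every $\varphi\in\mathsf{P}(I)$ there is a (not necessarily unique) map $f\colon I\to\Pi$ with $\varphi=f^{*}(\mathsf{tr})$ in the fibre $\mathsf{P}(I)$. The set $\Pi$ plays the role of the generators of truth values. Since an implicative algebra must carry a \emph{complete} lattice, I would take as carrier the powerset $A:=\mathcal{P}(\Pi)$ ordered by reverse inclusion (so that arbitrary meets in $A$ are unions of subsets), and interpret each $S\subseteq\Pi$ as the truth value $v(S):=\forall_{a\in S}\mathsf{tr}[a]\in\mathsf{P}(1)$, the universal quantification (the right adjoint to reindexing along $S\to 1$) of the restriction of $\mathsf{tr}$ to $S$, where $\mathsf{tr}[a]\in\mathsf{P}(1)$ denotes the reindexing of $\mathsf{tr}$ along the point $a\colon 1\to\Pi$. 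Because singletons already exhaust $\mathsf{P}(1)$ and $\forall$ turns unions into meets, $v$ is a surjective, meet-preserving map of preorders whose poset reflection recovers $\mathsf{P}(1)$.

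Next I would transport the logical operations of $\mathsf{P}$ onto $A$. The separator is the natural candidate $\Sigma:=\{S\in A: v(S)\text{ is the top of }\mathsf{P}(1)\}$, i.e.\ the subsets whose associated truth value is valid in the tripos. For implication, the generic predicate supplies ``raw'' operations at the level of points: the predicate $(x,y)\mapsto\big(\mathsf{tr}(x)\to\mathsf{tr}(y)\big)$ on $\Pi\times\Pi$ factors as $g^{*}(\mathsf{tr})$ for some $g\colon\Pi\times\Pi\to\Pi$, and more generally, for each $S$ the predicate $y\mapsto\big(v(S)\to\mathsf{tr}(y)\big)$ on $\Pi$ factors through $\mathsf{tr}$ via some $h_{S}\colon\Pi\to\Pi$. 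Using such representing functions one defines $S\to T$ on $A$, the design goal being an operation that is anti-monotone in its first argument, monotone in its second, distributes over arbitrary meets of the second argument, and recovers the Heyting implication through $v$, in the sense that $v(S\to T)=v(S)\to v(T)$. I would then check that $\Sigma$ is upward closed, closed under modus ponens, and contains $\mathbf{K}$ and $\mathbf{S}$; these last facts follow by reading off, through $v$, the tripos-validity of the corresponding combinatory axioms.

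With $\mathbb{A}=(A,\le,\to,\Sigma)$ in hand, I would construct the isomorphism $\mathsf{P}_{\mathbb{A}}\cong\mathsf{P}$. Fibrewise, a family $\varphi\in A^{I}$ is sent to the predicate $i\mapsto v(\varphi(i))\in\mathsf{P}(I)$, built by combining copies of $\mathsf{tr}$ along the components of $\varphi$. The generic predicate property makes this map surjective on each fibre (take the singleton families $S_{i}=\{f(i)\}$ for a representing $f$), while the definitions of $\Sigma$ and of $\vdash_{\Sigma[I]}$ are arranged so that $\varphi\vdash_{\Sigma[I]}\psi$ holds exactly when $v\circ\varphi\le v\circ\psi$ in $\mathsf{P}(I)$; this yields a well-defined order-isomorphism of posets on every fibre. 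Naturality is immediate because reindexing in $\mathsf{P}_{\mathbb{A}}$ is precomposition, which commutes with the pointwise application of $v$; together with preservation of the Heyting operations (from the previous step) this upgrades the fibrewise bijections to an isomorphism of triposes.

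The hard part, I expect, is the definition of $\to$ on $A$ and the proof that $(A,\le,\to,\Sigma)$ satisfies the implicative-algebra axioms \emph{on the nose}. The generic predicate supplies representing functions such as $g$ and $h_{S}$ only up to fibrewise equality, whereas the axioms of an implicative algebra --- strict anti-/monotonicity together with the exact identity $a\to\bigwedge_i b_i=\bigwedge_i(a\to b_i)$ --- must hold as genuine equalities in the complete lattice. There is a real tension here: the most naive definitions either break anti-monotonicity in the first argument, or, if one forces anti-monotonicity by a union over $a\in S$, fail to reproduce $v(S)\to v(T)$, since $\bigwedge_{a\in S}\big(\mathsf{tr}[a]\to v(T)\big)$ and $\big(\bigwedge_{a\in S}\mathsf{tr}[a]\big)\to v(T)$ differ in a Heyting algebra. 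Reconciling the strictness demanded by the lattice structure with the merely-up-to-$\Sigma$ nature of the data coming from $\mathsf{P}$ is the technical crux, and I would expect the correct definition of $\to$ to involve a canonical meet over all admissible representing functions, precisely in order to force these identities to hold exactly; the remaining verifications should then be routine but lengthy.
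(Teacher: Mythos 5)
Note first that the paper you are being compared against does not prove this statement at all: Theorem~\ref{teomiq} is quoted as a black box from the reference \cite{miq2} (``Implicative algebras II''), so the only possible comparison is with that external proof. Measured against it, your outline tracks the actual strategy quite faithfully: fix a generic predicate $\mathsf{tr}\in\mathsf{P}(\Pi)$, take the carrier $A=\mathcal{P}(\Pi)$ under reverse inclusion so that arbitrary meets exist and are unions, let $v(S)$ be the $\forall$-quantification of the restriction of $\mathsf{tr}$ to $S$, define $\Sigma$ as the preimage of the top of $\mathsf{P}(1)$, and compare the fibres of $\mathsf{P}_{\mathbb{A}}$ and $\mathsf{P}$ via a map built by reindexing and $\forall$-quantifying $\mathsf{tr}$ along the membership relation of a family $\varphi\in A^I$. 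You have also correctly located the crux in the tension between the on-the-nose axioms for $\to$ and the merely-representable data supplied by genericity.

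The reason this cannot count as a proof is that the crux is diagnosed but not resolved: the implication on $A$ is never actually defined, only a wish-list of its properties is given, and you yourself observe that the naive candidates fail. The resolution (which you half-guess in your last sentence) is to choose, by the axiom of choice, a coding function $\mathrm{imp}\colon A\times\Pi\to\Pi$ with $\mathsf{tr}[\mathrm{imp}(S,\tau)]=v(S)\Rightarrow\mathsf{tr}[\tau]$, and to set $S\to T:=\{\mathrm{imp}(S',\tau)\;:\;S'\supseteq S,\ \tau\in T\}$: the union over $\tau\in T$ makes $S\to\bigwedge_i T_i=\bigwedge_i(S\to T_i)$ hold literally, the union over all antecedents $S'\le S$ restores anti-monotonicity literally, and since the conjunct indexed by $S'=S$ is the least one, $v(S\to T)=v(S)\Rightarrow v(T)$ still holds in the Heyting algebra $\mathsf{P}(1)$. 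Without some such definition nothing downstream can be checked, in particular $\mathbf{K},\mathbf{S}\in\Sigma$ and closure of $\Sigma$ under modus ponens. The second substantive omission is the equivalence $\varphi\vdash_{\Sigma[I]}\psi\iff v\circ\varphi\le v\circ\psi$ in $\mathsf{P}(I)$, which you describe as ``arranged'': it is exactly the point where the adjunctions $\exists\dashv f^*\dashv\forall$ and the Beck--Chevalley condition of the tripos are used, and it is what upgrades fibrewise surjectivity (which genericity gives for free) to an order-isomorphism. So: right skeleton, right diagnosis, but the construction that makes the theorem true is still missing, and it is precisely the part that constitutes the content of \cite{miq2}.
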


Recall also (see e.g.\ \cite{VOO08}) that to every tripos $\mathsf{P}$ over $\mathbf{Set}$ is associated an elementary topos $\mathbf{Set}[\mathsf{P}]$ obtained by means of the so-called ``tripos-to-topos'' construction (see \cite{HJP}) whose internal logic can be reduced to that of the corresponding tripos as shown e.g.\ in \cite{VOO08}.

\section{Implicative models of $\mathbf{(I)ZF}$}\label{s:ImpModels}
To define our implicative models we work in $\mathbf{ZFC}$ as metatheory and for our convenience (see Remark \ref{classmodel}) we further assume a strongly inaccessible cardinal $\kappa$ to exist.
Let now $\mathbb{A}$ be a fixed implicative algebra with $|A|<\kappa$.

We define the following hierarchy of sets indexed by ordinals:
$$W_\alpha^{\mathbb{A}}:=\begin{cases} 
\emptyset\textrm{ if }\alpha=0\\
\mathsf{Part}(W_{\beta}^{\mathbb{A}},A)\textrm{ if }\alpha=\beta+1\\
\bigcup_{\beta<\alpha}W_{\beta}^{\mathbb{A}}\textrm{ if }\alpha\textrm{ is a limit ordinal}\\
\end{cases}$$
where $\mathsf{Part}(X,Y)$ denotes the set of partial functions from $X$ to $Y$.
We take $\mathbf{W}$ to be $W^{\mathbb{A}}_{\kappa}$.
Since $W_{\alpha}^{\mathbb{A}}\subseteq W_{\beta}^{\mathbb{A}}$ if $\alpha<\beta$, one can assign a rank in the hierarchy to every element of $\mathbf{W}$ in the obvious way.
In particular, we can define simultaneously, by recursion on rank, two functions $\in_\mathbf{W}, =_{\mathbf{W}}:\mathbf{W}\times \mathbf{W}\rightarrow A$:
\begin{enumerate}
\item $\alpha\in_{\mathbf{W}}\beta:=\lar{\exists} _{t\in\partial_0(\beta)}\left( \beta(t)\times (t=_{\mathbf{W}}\alpha) \right)$\footnote{We denote with $\partial_0(f)$ the domain of a partial function $f$, that is the set of those $x$ for which $f(x)$ is defined.}
\item $\alpha=_{\mathbf{W}}\beta:=(\alpha\subseteq_{\mathbf{W}}\beta)\times(\beta\subseteq_{\mathbf{W}}\alpha) $
where $\alpha\subseteq_{\mathbf{W}}\beta:=\lar{\forall} _{t\in \partial_0(\alpha)}\left(\alpha(t)\rightarrow t\in_\mathbf{W}\beta\right)$.
\end{enumerate}
We interpret the language of set theory in such a way that to every formula in context $\varphi[x_{1},...,x_n]$ we associate a function $$\left\|\varphi[x_{1},...,x_n]\right\|:\mathbf{W}^{n}\rightarrow A\footnote{If $n=0$, then $\left\|\varphi\,[\,]\right\|$ is identified with an element of $A$.}$$
by recursion on complexity of formulas as follows:
\begin{enumerate}
\item[]
\item $\left\|x_i\in x_j\,[x_1,...,x_n]\right\|(\alpha_1,...,\alpha_n):\equiv \alpha_i\in_{\mathbf{W}}\alpha_j$
\item $\left\|x_i= x_j\,[x_1,...,x_n]\right\|(\alpha_1,...,\alpha_n):\equiv\alpha_i=_{\mathbf{W}}\alpha_j$
\item $\left\|\bot[\underline{x}]\right\|(\underline{\alpha}):\equiv \bot$
\item $\left\|\varphi\wedge \psi[\underline{x}]\right\|(\underline{\alpha}):\equiv \left\|\varphi[\underline{x}]\right\|(\underline{\alpha})\times \left\|\psi[\underline{x}]\right\|(\underline{\alpha})$
\item $\left\|\varphi\vee \psi[\underline{x}]\right\|(\underline{\alpha}):\equiv \left\|\varphi[\underline{x}]\right\|(\underline{\alpha})+ \left\|\psi[\underline{x}]\right\|(\underline{\alpha})$
\item $\left\|\varphi\rightarrow \psi[\underline{x}]\right\|(\underline{\alpha}):\equiv \left\|\varphi[\underline{x}]\right\|(\underline{\alpha})\rightarrow \left\|\psi[\underline{x}]\right\|(\underline{\alpha})$
\item $\left\|\exists y\,\varphi\,[\underline{x}]\right\|(\underline{\alpha}):\equiv \lar{\exists} _{\beta\in \mathbf{W}}\left(\left\|\varphi\,[\underline{x},y]\right\|(\underline{\alpha},\beta)\right)$
\item $\left\|\forall y\,\varphi\,[\underline{x}]\right\|(\underline{\alpha}):\equiv \lar{\forall} _{\beta\in \mathbf{W}}\left(\left\|\varphi\,[\underline{x},y]\right\|(\underline{\alpha},\beta)\right)$\footnote{In clauses (vii) and (viii), we assume, without loss of generality, that $y$ is not a variable in the context $[\underline{x}]$.}
\item[]
\end{enumerate}
We write $\mathbf{W}\vDash \varphi\,[\underline{x}]$ for $\bigwedge_{\underline{\alpha}\in \mathbf{W}^{n}}\left(\left\|\varphi\,[\underline{x}]\right\|(\underline{\alpha})\right)\in \Sigma$ when $[\underline{x}]$ has length $n>0$, and for every closed formula $\varphi$ we write $\mathbf{W}\vDash \varphi$ for $\left\|\varphi\,[\,]\right\|\in \Sigma$.
Thus, $\mathbf{W}\vDash \varphi\,[\underline{x}]$ just means that $\left\|\varphi[\underline{x}]\right\|$ is in the maximal class of $\mathsf{P}_{\mathbb{A}}(\mathbf{W}^{n})$, where $n$ is the length of the context of variables $[\underline{x}]$. 
We will often write $\left\|\varphi\right\|$ instead of $\left\|\varphi[\,]\right\|$.\\

\begin{remark}\label{classmodel}
  Note that here, we chose to construct the model $\mathbf{W}$ as a set, so that we can use it later (together with the suitable $\mathbb{A}$-equivalence) as an object of the topos $\mathbf{Set}[\mathsf{P}_{\mathbb{A}}]$ induced by the implicative algebra~$\mathbb{A}$ (cf Section~\ref{s:ModTopos}).
  However, if one is only interested in the set-theoretic part of the work, it is actually simpler to construct the model $\mathbf{W}$ as a proper class (as it is traditionally done in forcing or in intuitionistic or classical realizability), thus removing the need of assuming the existence of an inaccessible cardinal~$\kappa$ (whose only purpose is to make the model $\mathbf{W}$ fit into a set).
\end{remark}

\subsection{Useful lemmas}

\begin{lemma}\label{not}
There exist $\mathbf{\rho}, \mathbf{j},\mathbf{\sigma},\mathbf{s}_1,\mathbf{s}_2,\mathbf{s}_3\in \Sigma$ such that 
\begin{enumerate}
\item $\mathbf{\rho}\leq \bigwedge_{\alpha\in \mathbf{W}}\left(\alpha=_{\mathbf{W}}\alpha\right)$
\item $\mathbf{j}\leq \bigwedge_{\alpha\in \mathbf{W}}\bigwedge_{u\in \partial_0(\alpha)}\left(\alpha(u)\rightarrow u\in_{\mathbf{W}}\alpha\right)$
\item $\mathbf{\sigma}\leq\bigwedge_{\alpha,\beta\in \mathbf{W}}\left(\alpha=_{\mathbf{W}}\beta\rightarrow \beta=_{\mathbf{W}}\alpha\right)$
\item $\mathbf{s}_1\leq \bigwedge_{\alpha,\beta,\gamma\in \mathbf{W}}\left(\alpha=_{\mathbf{W}}\beta\times \gamma\in_{\mathbf{W}} \alpha\rightarrow \gamma\in_{\mathbf{W}}\beta\right)$
\item $\mathbf{s}_2\leq \bigwedge_{\alpha,\beta,\gamma\in \mathbf{W}}\left(\alpha=_{\mathbf{W}}\beta\times \alpha\in_{\mathbf{W}} \gamma\rightarrow \beta\in_{\mathbf{W}}\gamma\right)$
\item $\mathbf{s}_3\leq \bigwedge_{\alpha,\beta,\gamma\in \mathbf{W}}\left(\alpha=_{\mathbf{W}}\beta\times \gamma=_{\mathbf{W}} \alpha\rightarrow \gamma=_{\mathbf{W}}\beta\right)$
\end{enumerate}
\end{lemma}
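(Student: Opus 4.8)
The plan is to treat each inequality as a \emph{realizability} statement: for each item I exhibit a closed $\lambda$-term with parameters in $\Sigma$ and verify the displayed inequality by a derivation in the typing calculus $\Gamma\vdash t:a$ introduced above, reading $\Gamma\vdash t:a$ as $t[\Gamma]\leq a$. Concretely, I first unfold the recursive definitions of $=_{\mathbf{W}}$, $\in_{\mathbf{W}}$ and $\subseteq_{\mathbf{W}}$, so that each target becomes a nesting of $\times$, $\rightarrow$, $\lar{\forall}$ and $\lar{\exists}$; the required term is then dictated by the introduction/elimination rules for these connectives ($\mathbf{p},\mathbf{p}_1,\mathbf{p}_2$ for $\times$; $\mathbf{e}$ and application to an abstraction for $\lar{\exists}$; instantiation for $\lar{\forall}$; abstraction and application for $\rightarrow$). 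Since the resulting terms are built from $\mathbf{p},\mathbf{p}_1,\mathbf{p}_2,\mathbf{e}$ together with parameters already known to lie in the separator, membership in $\Sigma$ follows from its closure properties. The six items split into those that are \emph{non-recursive} and those that genuinely recurse on rank.

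For the non-recursive items I argue directly. Symmetry (3) merely swaps the two conjuncts of $\alpha=_{\mathbf{W}}\beta=(\alpha\subseteq_{\mathbf{W}}\beta)\times(\beta\subseteq_{\mathbf{W}}\alpha)$, so $\sigma:=\lambda z.\mathbf{p}\,(\mathbf{p}_2 z)\,(\mathbf{p}_1 z)$ works. For (2), unfolding $u\in_{\mathbf{W}}\alpha=\lar{\exists}_{t\in\partial_0(\alpha)}(\alpha(t)\times(t=_{\mathbf{W}}u))$ and choosing the \emph{diagonal} witness $t:=u$ reduces the goal, given $x:\alpha(u)$, to producing a realizer of $\alpha(u)\times(u=_{\mathbf{W}}u)$; this is $\mathbf{p}\,x\,\rho$ with $\rho$ the reflexivity realizer of (1) (constructed below), so $\mathbf{j}:=\lambda x.\mathbf{e}(\mathbf{p}\,x\,\rho)$. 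Finally (4) follows from (5) with no new recursion: given $w:\alpha=_{\mathbf{W}}\beta$ and a witness $t\in\partial_0(\alpha)$ with $\alpha(t)$ and $t=_{\mathbf{W}}\gamma$ extracted from $\gamma\in_{\mathbf{W}}\alpha$, one instantiates $\mathbf{p}_1 w:\alpha\subseteq_{\mathbf{W}}\beta$ at $t$ to obtain $t\in_{\mathbf{W}}\beta$, and then transports membership along $t=_{\mathbf{W}}\gamma$ via $\mathbf{s}_2$ to get $\gamma\in_{\mathbf{W}}\beta$.

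The essential difficulty lies in (1), (5) and (6), where the realizer must call itself at elements of strictly smaller rank, because $=_{\mathbf{W}}$ and $\in_{\mathbf{W}}$ are defined by recursion on rank. Here I obtain the realizers as fixed points: a closed pure $\lambda$-term $\mathbf{Y}$ with $\mathbf{Y}F\twoheadrightarrow_\beta F(\mathbf{Y}F)$ exists and has $\mathbf{Y}\in\Sigma$, and since $\beta$-reduction can only increase the interpretation, $\mathbf{Y}F\leq F(\mathbf{Y}F)$. For reflexivity I set $\rho:=\mathbf{Y}(\lambda f.\mathbf{p}\,r_f\,r_f)$ with $r_f:=\lambda x.\mathbf{e}(\mathbf{p}\,x\,f)$, so that $\rho\leq\mathbf{p}\,r_\rho\,r_\rho$, and then prove $\rho\leq\alpha=_{\mathbf{W}}\alpha$ by induction on the rank of $\alpha$, the induction hypothesis supplying $\rho\leq u=_{\mathbf{W}}u$ for the elements $u\in\partial_0(\alpha)$ of smaller rank used inside $r_\rho$. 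For (5) and (6) I take a \emph{mutual} fixed point defining $\mathbf{s}_2,\mathbf{s}_3$ simultaneously, following the two derivations above: the derivation of $\mathbf{s}_3$ moves a membership across a subset inclusion and bottoms out in $\mathbf{s}_2$, while the derivation of $\mathbf{s}_2$ appeals to $\mathbf{s}_3$ to chain two equalities. Both inequalities are then proved together by induction on the multiset $\{\mathrm{rk}(\alpha),\mathrm{rk}(\beta),\mathrm{rk}(\gamma)\}$ under the well-founded multiset order.

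I expect the main obstacle to be precisely the well-foundedness bookkeeping for this mutual recursion: one must check that every recursive appeal strictly decreases the measure. This does hold, because each appeal is triggered by unfolding a membership $\in_{\mathbf{W}}$ or a subset inclusion $\subseteq_{\mathbf{W}}$, which replaces one of the three arguments by an element of strictly smaller rank while leaving the ranks of the others unchanged, so the multiset of ranks drops. A secondary, purely bureaucratic point is that $\times$ is associative, commutative and unital only up to $\equiv_{\Sigma}$; this causes no trouble, since throughout I reason with the typing rules (which are sound for $\leq$) rather than with on-the-nose identities, and the final membership of each realizer in $\Sigma$ is guaranteed by the combinators used being encodings of pure $\lambda$-terms applied to parameters already in $\Sigma$.
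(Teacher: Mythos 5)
Your proposal is correct and takes essentially the same route as the paper: the same terms $\rho:=\mathbf{y}(\lambda r.\mathbf{p}(\lambda x.\mathbf{e}(\mathbf{p}xr))(\lambda x.\mathbf{e}(\mathbf{p}xr)))$, $\mathbf{j}:=\lambda x.\mathbf{e}(\mathbf{p}x\rho)$ and $\sigma:=\lambda x.\mathbf{p}(\mathbf{p}_2x)(\mathbf{p}_1x)$, the same derivation of $\mathbf{s}_1$ from $\mathbf{s}_2$, and fixed-point realizers for the recursive items validated by induction on ranks. The only (immaterial) difference is in the bookkeeping for transitivity: the paper closes the recursion with a single fixed point for $\mathbf{s}_3$ via the cycle $\mathbf{s}_3\to\mathbf{s}_2\to\mathbf{s}_1\to\mathbf{s}_3$, measured by the maximum of the three ranks over a full cycle, whereas you take a mutual fixed point for $\mathbf{s}_2,\mathbf{s}_3$ (inlining the $\mathbf{s}_1$-step) and use the multiset order, which indeed strictly decreases at every single recursive call.
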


\begin{proof}
\begin{enumerate}
\item Let $\mathbf{\rho}$ be $\mathbf{y}f\in \Sigma$ where $f:=\lambda r.\mathbf{p}(\lambda x.\mathbf{e}(\mathbf{p}xr))(\lambda x.\mathbf{e}(\mathbf{p}xr))$ and $\mathbf{y}$ is a pure closed $\lambda$-term which is a fixed point operator such that $\mathbf{y}f$ $\beta$-reduces to $f(\mathbf{y}f)$ for every $f$ (see e.g.\ \cite{VOO08}). We claim that $\mathbf{\rho}\leq \alpha=_{\mathbf{W}}\alpha$ for every $\alpha\in \mathbf{W}$. Let $\alpha$ be an arbitrary element of $\mathbf{W}$ and let us assume that $\mathbf{\rho}\leq \beta=_{\mathbf{W}}\beta$ for every $\beta\in \mathbf{W}$ with rank in the hierarchy strictly less than that of  $\alpha$ (and thus in particular for every element of the domain of $\alpha$). Then we can consider the following derivation tree in which we used only rules from the previous section.
$$\cfrac{\cfrac{\cfrac{\cfrac{\cfrac{\cfrac{}{x:\alpha(u)\vdash x:\alpha(u)\;(\textrm{for all }u\in \partial_{0}(\alpha))}\qquad \cfrac{}{x:\alpha(u)\vdash \mathbf{\rho}:u=_{\mathbf{W}}u\;(\textrm{for all }u\in \partial_{0}(\alpha))}}{x:\alpha(u)\vdash \mathbf{p}x\mathbf{\rho}: \alpha(u)\times u=_{\mathbf{W}}u \;(\textrm{for all }u\in \partial_{0}(\alpha))}}{x:\alpha(u)\vdash \mathbf{e}(\mathbf{p}x\mathbf{\rho}): u\in_{\mathbf{W}}\alpha \;(\textrm{for all }u\in \partial_{0}(\alpha))}}{\vdash\lambda x.\mathbf{e}(\mathbf{p}x\mathbf{\rho}):\alpha(u)\rightarrow u\in_{\mathbf{W}}\alpha \;(\textrm{for all }u\in \partial_{0}(\alpha)) }}{\vdash\lambda x.\mathbf{e}(\mathbf{p}x\mathbf{\rho}):\alpha\subseteq_{\mathbf{W}}\alpha}}{\vdash\mathbf{p}(\lambda x.\mathbf{e}(\mathbf{p}x\mathbf{\rho}))(\lambda x.\mathbf{e}(\mathbf{p}x\mathbf{\rho})):\alpha=_{\mathbf{W}}\alpha}$$

The last $\lambda$-term in the deduction tree is a $\beta$ reduction of $\mathbf{\rho}$. Thus we can conclude that $\mathbf{\rho}\leq \alpha=_{\mathbf{W}}\alpha$.

\item Let $\mathbf{j}$ be defined as $\lambda x.\mathbf{e}(\mathbf{p}x\mathbf{\rho})\in \Sigma$. Assume $\alpha\in \mathbf{W}$ and $u\in \partial_{0}(\alpha)$. Then $x: \alpha(u)\vdash\mathbf{p}x\mathbf{\rho}: \alpha(u)\times u=_{\mathbf{W}}u$ . Hence, $x: \alpha(u)\vdash \mathbf{e}(\mathbf{p}x\mathbf{\rho}): u\in_{\mathbf{W}}\alpha$. Thus $$\mathbf{j}\leq \bigwedge_{\alpha\in \mathbf{W}}\bigwedge_{u\in \partial_0(\alpha)}\left(\alpha(u)\rightarrow u\in_{\mathbf{W}}\alpha\right)$$
\item $\mathbf{\sigma}$ can be just defined as $\lambda x.\mathbf{p}(\mathbf{p}_2x)(\mathbf{p}_1x) \in \Sigma$
\item[(iv),(v),(vi)] Assume $\mathbf{s}_3$ to exist. 

Let $\alpha,\beta,\gamma\in \mathbf{W}$ and let $\Gamma(u)$ be a shorthand for 
$$x:\alpha=_{\mathbf{W}}\beta\times \alpha\in_{\mathbf{W}}\gamma, y:\gamma(u)\times u=_{\mathbf{W}}\alpha$$ where $u$ an arbitrary element of the domain of $\gamma$. 
Since 
$$\vdash \mathbf{s}_3:\alpha=_{\mathbf{W}}\beta\times u=_{\mathbf{W}}\alpha\rightarrow u=_{\mathbf{W}}\beta$$
we obtain $\Gamma(u)\vdash \mathbf{p}(\mathbf{p}_1y)(\mathbf{s}_3(\mathbf{p}(\mathbf{p}_1x)(\mathbf{p}_2y))):\gamma(u)\times u=_{\mathbf{W}}\beta $ from which it follows that 
$$\Gamma(u)\vdash \mathbf{e}(\mathbf{p}(\mathbf{p}_1y)(\mathbf{s}_3(\mathbf{p}(\mathbf{p}_1x)(\mathbf{p}_2y)))):\beta\in_{\mathbf{W}}\gamma$$
Since $x:\alpha=_{\mathbf{W}}\beta\times \alpha\in_{\mathbf{W}}\gamma\vdash \mathbf{p}_2 x:\alpha\in_{\mathbf{W}}\gamma$, we get
 $$x:\alpha=_{\mathbf{W}}\beta\times \alpha\in_{\mathbf{W}}\gamma\vdash (\mathbf{p}_2x)(\lambda y.\mathbf{e}(\mathbf{p}(\mathbf{p}_1y)(\mathbf{s}_3(\mathbf{p}(\mathbf{p}_1x)(\mathbf{p}_2y))))):\beta\in_{\mathbf{W}}\gamma$$
 from which it follows that 
 $$\vdash\lambda x.(\mathbf{p}_2x)(\lambda y.(\mathbf{e}(\mathbf{p}(\mathbf{p}_1y)(\mathbf{s}_3(\mathbf{p}(\mathbf{p}_1x)(\mathbf{p}_2y)))))):\alpha=_{\mathbf{W}}\beta\times \alpha\in_{\mathbf{W}}\gamma\rightarrow \beta\in_{\mathbf{W}} \gamma$$
From this it follows that $\mathbf{s}_2$ can be defined as $\lambda x.(\mathbf{p}_2x)(\lambda y.(\mathbf{e}(\mathbf{p}(\mathbf{p}_1y)(\mathbf{s}_3(\mathbf{p}(\mathbf{p}_1x)(\mathbf{p}_2y))))))$.

Assume now $\mathbf{s}_2$ to exist and consider $\Gamma'(u)$ a shorthand for 
$$x:\alpha=_{\mathbf{W}}\beta\times \gamma\in_{\mathbf{W}}\alpha,y:\alpha(u)\times u=_{\mathbf{W}}\gamma$$ where $u$ is an arbitrary element of the domain of $\alpha$. We easily see that 
$$\Gamma'(u)\vdash \mathbf{p}(\mathbf{p}_2y)((\mathbf{p_1}(\mathbf{p}_1x))(\mathbf{p}_1y)):u=_{\mathbf{W}}\gamma \times u\in_{\mathbf{W}}\beta $$ Thus
$$\Gamma'(u)\vdash\mathbf{s}_2( \mathbf{p}(\mathbf{p}_2y)((\mathbf{p_1}(\mathbf{p}_1x))(\mathbf{p}_1y))):\gamma\in_{\mathbf{W}}\beta $$
Since $x:\alpha=_{\mathbf{W}}\beta\times \gamma\in_{\mathbf{W}}\alpha\vdash \mathbf{p}_2x:\gamma\in_{\mathbf{W}}\alpha$, we have that
$$x:\alpha=_{\mathbf{W}}\beta\times \gamma\in_{\mathbf{W}}\alpha\vdash (\mathbf{p}_2x)(\lambda y.\mathbf{s}_2( \mathbf{p}(\mathbf{p}_2y)((\mathbf{p_1}(\mathbf{p}_1x))(\mathbf{p}_1y)))):\gamma\in_{\mathbf{W}}\beta$$
from which it follows that 
$$\vdash \lambda x.(\mathbf{p}_2x)(\lambda y.\mathbf{s}_2( \mathbf{p}(\mathbf{p}_2y)((\mathbf{p_1}(\mathbf{p}_1x))(\mathbf{p}_1y)))):\alpha=_{\mathbf{W}}\beta\times \gamma\in_{\mathbf{W}}\alpha\rightarrow\gamma\in_{\mathbf{W}}\beta$$
Thus $\mathbf{s}_1$ can be defined as $\lambda x.(\mathbf{p}_2x)(\lambda y.\mathbf{s}_2( \mathbf{p}(\mathbf{p}_2y)((\mathbf{p_1}(\mathbf{p}_1x))(\mathbf{p}_1y))))$.

Similarly, one can prove that if $\mathbf{s}_1$ is assumed to exist, then one can define $\mathbf{s}_3$ as a $\lambda$-term containing $\mathbf{s}_1$ as the unique parameter.

The idea is now to compose this mutual dependences to define, by a fix point $\mathbf{y}g$, one among $\mathbf{s}_1$, $\mathbf{s}_2$ and $\mathbf{s}_3$, and then define the other two using that one. So for example, if we define $\mathbf{s}_3$ as a fixpoint, we can then define $\mathbf{s}_2$ using $\mathbf{s}_3$ and then $\mathbf{s}_1$ using $\mathbf{s}_2$.
This works well since composing the proofs above one can see that $\mathbf{s}_3\leq \bigwedge_{\alpha,\beta,\gamma\in \mathbf{W}}\left(\alpha=_{\mathbf{W}}\beta\times \gamma=_{\mathbf{W}} \alpha\rightarrow \gamma=_{\mathbf{W}}\beta\right)$ whenever $\mathbf{s}_3\leq u=_{\mathbf{W}}v\times w=_{\mathbf{W}} u \rightarrow w=_{\mathbf{W}}v$ for every $u,v,w$ with rank strictly less than the maximum of the ranks of $\alpha$, $\beta$ and $\gamma$.
\end{enumerate}
\end{proof}

Given two lists of parameters $\underline{\alpha}=\alpha_1,\ldots,\alpha_n\in\mathbf{W}^n$ and $\underline{\beta}=\beta_1,\ldots,\beta_n\in\mathbf{W}^n$ (for $n\ge 0$), we write
$$\underline{\alpha}=_{\mathbf{W}}\underline{\beta}~:=~
(\cdots((\alpha_1=_{\mathbf{W}}\beta_1\times
\alpha_2=_{\mathbf{W}}\beta_2)\times
\alpha_3=_{\mathbf{W}}\beta_3)\cdots)\times
\alpha_n=_{\mathbf{W}}\beta_n\eqno(\in A)$$
simply letting $\underline{\alpha}=_{\mathbf{W}}\underline{\beta}:=\top$
in the particular case where $n=0$.
Note that the element $\underline{\alpha}=_{\mathbf{W}}\underline{\beta}$ ($\in A$) depends on the order of the parameters $\underline{\alpha}=\alpha_1,\ldots,\alpha_n$ and $\underline{\beta}=\beta_1,\ldots,\beta_n$ (and on the choice to associate $\times$'s to the left) when considered on the nose, but up to the equivalence $\equiv_{\Sigma}$, it is of course invariant under any (common) permutation of the parameters $\underline{\alpha}$ and $\underline{\beta}$.

\begin{lemma}\label{subfor}For every formula in context $\varphi\,[\underline{x}]$ where $\underline{x}$ has length $n>0$, there exists $\mathbf{r}^{\varphi[\underline{x}]}\in \Sigma$ such that 
$$\mathbf{r}^{\varphi[\underline{x}]}\leq \bigwedge_{\underline{\alpha}\in \mathbf{W}^{n}}\bigwedge_{\underline{\beta}\in \mathbf{W}^{n}}\left( \underline{\alpha}=_{\mathbf{W}}\underline{\beta}\times \left\|\varphi\,[\underline{x}]\right\|(\underline{\alpha})\rightarrow \left\|\varphi\,[\underline{x}]\right\|(\underline{\beta})\right)\,.$$
\end{lemma}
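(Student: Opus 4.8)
The plan is to prove the statement by structural induction on $\varphi$, building the realizer $\mathbf{r}^{\varphi[\underline{x}]}$ out of the realizers of the immediate subformulas with the help of the typing rules of Section~\ref{s:ImpAlgTrip} and the realizers $\rho,\sigma,\mathbf{s}_1,\mathbf{s}_2,\mathbf{s}_3$ supplied by Lemma~\ref{not}. Since all the immediate subformulas of a formula with context of positive length again have contexts of positive length, the induction never meets the empty context, so the hypothesis $n>0$ causes no trouble. Two auxiliary facts about tuple equality are needed first. Because $(\underline{\alpha},\gamma)=_{\mathbf{W}}(\underline{\beta},\gamma)$ is by definition $(\underline{\alpha}=_{\mathbf{W}}\underline{\beta})\times(\gamma=_{\mathbf{W}}\gamma)$, the reflexivity realizer $\rho$ lets us \emph{extend} a proof of $\underline{\alpha}=_{\mathbf{W}}\underline{\beta}$ to one of $(\underline{\alpha},\gamma)=_{\mathbf{W}}(\underline{\beta},\gamma)$ by a single pairing; and by projecting the nested product with $\mathbf{p}_1,\mathbf{p}_2$, applying $\sigma$ componentwise, and repairing with $\mathbf{p}$, one obtains a closed $\lambda$-term, which we call $\mathrm{sym}$ (depending only on $n$), realizing tuple symmetry $\underline{\alpha}=_{\mathbf{W}}\underline{\beta}\rightarrow\underline{\beta}=_{\mathbf{W}}\underline{\alpha}$. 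Both these terms lie in $\Sigma$.

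For the base cases I proceed as follows. For an atomic formula $x_i=x_j$, I extract the components $\alpha_i=_{\mathbf{W}}\beta_i$ and $\alpha_j=_{\mathbf{W}}\beta_j$ from $\underline{\alpha}=_{\mathbf{W}}\underline{\beta}$ by iterated projections, and then chain $\sigma$ with $\mathbf{s}_3$ twice (transitivity of $=_{\mathbf{W}}$ along $\beta_i=_{\mathbf{W}}\alpha_i=_{\mathbf{W}}\alpha_j=_{\mathbf{W}}\beta_j$) to turn $\alpha_i=_{\mathbf{W}}\alpha_j$ into $\beta_i=_{\mathbf{W}}\beta_j$. For $x_i\in x_j$, I apply $\mathbf{s}_2$ to rewrite the left member ($\alpha_i\in_{\mathbf{W}}\alpha_j$ into $\beta_i\in_{\mathbf{W}}\alpha_j$, using $\alpha_i=_{\mathbf{W}}\beta_i$) and then $\mathbf{s}_1$ to rewrite the right member ($\beta_i\in_{\mathbf{W}}\alpha_j$ into $\beta_i\in_{\mathbf{W}}\beta_j$, using $\alpha_j=_{\mathbf{W}}\beta_j$). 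For $\bot$, where $\left\|\bot[\underline{x}]\right\|$ is constantly $\bot$, the realizer is simply $\lambda u.\mathbf{p}_2 u$.

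The inductive cases for the connectives follow the corresponding typing rules, writing $\mathbf{r}^{\varphi},\mathbf{r}^{\psi}$ for the realizers furnished by the induction hypothesis (in the same context $[\underline{x}]$). For $\varphi\wedge\psi$ I split the input with $\mathbf{p}_1,\mathbf{p}_2$, feed the tuple-equality proof to both $\mathbf{r}^{\varphi}$ and $\mathbf{r}^{\psi}$, and repair with $\mathbf{p}$; for $\varphi\vee\psi$ I case-split via the sum-elimination rule and reinject each transported disjunct with $\mathbf{j}_1,\mathbf{j}_2$. The implication $\varphi\rightarrow\psi$ is the one delicate case, precisely because $\rightarrow$ is contravariant in its premise: to transport a hypothesis $w:\left\|\varphi\right\|(\underline{\beta})$ back to $\left\|\varphi\right\|(\underline{\alpha})$ I must use the \emph{symmetric} tuple-equality proof $\mathrm{sym}\,x$. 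Concretely, with $u$ the input pair, the realizer is $\mathbf{r}^{(\varphi\rightarrow\psi)[\underline{x}]}:=\lambda u.\lambda w.\,\mathbf{r}^{\psi}(\mathbf{p}\,(\mathbf{p}_1 u)\,((\mathbf{p}_2 u)(\mathbf{r}^{\varphi}(\mathbf{p}\,(\mathrm{sym}\,(\mathbf{p}_1 u))\,w))))$, where the inner application first carries $w$ across to $\left\|\varphi\right\|(\underline{\alpha})$, then applies the transported implication to reach $\left\|\psi\right\|(\underline{\alpha})$, and finally $\mathbf{r}^{\psi}$ carries the result back to $\left\|\psi\right\|(\underline{\beta})$.

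For the quantifiers the subformula $\varphi[\underline{x},y]$ has context of length $n+1>0$, so the induction hypothesis supplies $\mathbf{r}^{\varphi[\underline{x},y]}$. In both cases, for a generic witness $\gamma\in\mathbf{W}$ I extend the tuple-equality proof to $(\underline{\alpha},\gamma)=_{\mathbf{W}}(\underline{\beta},\gamma)$ by pairing it with $\rho$ as noted above, and apply $\mathbf{r}^{\varphi[\underline{x},y]}$; since the resulting branch term does not mention $\gamma$, the $\lar{\forall}$-rules give the universal case directly, while for $\exists y\,\varphi$ I wrap the branch with $\mathbf{e}$ (the $\lar{\exists}$-introduction term) inside the $\lar{\exists}$-elimination rule. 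The main obstacle is exactly the implication step, where one must have realized tuple symmetry and thread the two directions of the induction hypothesis in the right order; every other case is routine assembly. Finally, $\mathbf{r}^{\varphi[\underline{x}]}\in\Sigma$ holds throughout because each realizer is a pure $\lambda$-term whose only parameters are the $\Sigma$-elements $\rho,\sigma,\mathbf{s}_1,\mathbf{s}_2,\mathbf{s}_3$ (hence $\mathrm{sym}$) of Lemma~\ref{not} together with the closed combinators $\mathbf{p},\mathbf{p}_1,\mathbf{p}_2,\mathbf{j}_1,\mathbf{j}_2,\mathbf{e}$, so property~(2) of the encoding of $\lambda$-terms keeps the constructed terms inside the separator.
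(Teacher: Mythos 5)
Your proof is correct and follows exactly the route the paper intends: the paper's own proof consists of the single sentence ``by induction on complexity of formulas, using the previous lemma for the atomic cases,'' and your argument is a faithful, detailed elaboration of that induction (with the atomic cases handled via $\sigma,\mathbf{s}_1,\mathbf{s}_2,\mathbf{s}_3$ and the contravariance of $\rightarrow$ handled via tuple symmetry, as required). Nothing further is needed.
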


\begin{proof}
By induction on complexity of formulas by using the previous lemma for the atomic cases.
\end{proof}
Also the following lemma can be easily proved as a consequence of the previous results and of the rules in the previous section.
\begin{lemma}\label{intlog} Let $\varphi[\underline{x}]$ and $\psi[\underline{x}]$ be formulas in context in the language of set theory and let $n$ be the length of $[\underline{x}]$.  
If $\varphi\vdash^{\underline{x}}_{\mathbf{IL}^=}\psi$, then $\left\|\varphi\,[\underline{x}]\right\|\vdash_{\Sigma[\mathbf{W}^{n}]}\left\|\psi\,[\underline{x}]\right\|$ (where with $\mathbf{IL}^=$ we denote first-order intuitionistic logic with equality on the language of $\mathbf{(I)ZF}$). In the case in which $\mathbb{A}$ is a classical implicative algebra, if $\varphi\vdash^{\underline{x}}_{\mathbf{CL}^=}\psi$, then $\left\|\varphi\,[\underline{x}]\right\|\vdash_{\Sigma[\mathbf{W}^{n}]}\left\|\psi\,[\underline{x}]\right\|$  (where with $\mathbf{CL}^=$ we denote first-order classical logic with equality on the language of $\mathbf{(I)ZF}$).
\end{lemma}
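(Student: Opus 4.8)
The plan is to argue by induction on a derivation of $\psi$ from $\varphi$ in a natural-deduction calculus for $\mathbf{IL}^=$, carrying along the whole context of open hypotheses. Concretely, I would prove the slightly more general statement that a derivation of $\psi$ from open assumptions $\chi_1,\dots,\chi_k$ (all in context $[\underline{x}]$) yields a \emph{uniform} realizer: a single $\lambda$-term $t$, built from the combinators and with parameters only among the separator elements produced by the earlier lemmas, such that for every valuation $\underline{\alpha}\in\mathbf{W}^{n}$ one has $z_1:\|\chi_1[\underline{x}]\|(\underline{\alpha}),\dots,z_k:\|\chi_k[\underline{x}]\|(\underline{\alpha})\vdash t:\|\psi[\underline{x}]\|(\underline{\alpha})$ in the sense of the term-assignment system of Section~\ref{s:ImpAlgTrip}. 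Since $t$ does not depend on $\underline{\alpha}$, discharging the hypotheses by $\lambda$-abstraction and then meeting over all $\underline{\alpha}\in\mathbf{W}^{n}$ produces a term whose interpretation lies below $\bigwedge_{\underline{\alpha}}(\|\varphi[\underline{x}]\|(\underline{\alpha})\rightarrow\|\psi[\underline{x}]\|(\underline{\alpha}))$; as this term is built purely from combinators and separator elements, it lies in $\Sigma$, and upward closure of $\Sigma$ then gives $\|\varphi[\underline{x}]\|\vdash_{\Sigma[\mathbf{W}^{n}]}\|\psi[\underline{x}]\|$. The whole argument thus reduces to checking that each inference rule is realized by an appropriate term former.

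For the propositional connectives and the quantifiers this is immediate from the defining clauses of $\|\cdot\|$: conjunction, disjunction and implication are interpreted by $\times$, $+$ and $\rightarrow$, and the quantifiers by $\lar{\forall}_{\beta\in\mathbf{W}}$ and $\lar{\exists}_{\beta\in\mathbf{W}}$, which are exactly the operations for which the framed rules of Section~\ref{s:ImpAlgTrip} provide introduction and elimination terms (instantiating the index set $I$ by $\mathbf{W}$). Each logical rule maps to the corresponding rule of the term-assignment system: $\wedge$-introduction to $\mathbf{p}$, $\wedge$-elimination to $\mathbf{p}_1,\mathbf{p}_2$, $\vee$-introduction to $\mathbf{j}_1,\mathbf{j}_2$ and $\vee$-elimination to the case term, $\rightarrow$-introduction to $\lambda$-abstraction and $\rightarrow$-elimination to application, $\forall$ and $\exists$ to the last four framed rules, and ex falso to the $\bot$-rule. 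The eigenvariable side-conditions of $\forall$-introduction and $\exists$-elimination correspond precisely to the quantified index $\overline{i}$ ranging over all of $\mathbf{W}$ and to the conclusion of the $\exists$-elimination rule not mentioning the bound index. This part is, in essence, the soundness of the implicative tripos $\mathsf{P}_{\mathbb{A}}$ for intuitionistic first-order logic, specialised to the two predicates of set theory.

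It remains to realize the rules governing $=$. Reflexivity $\vdash x=x$ is realized by the element $\mathbf{\rho}$ of Lemma~\ref{not}(i), and the Leibniz replacement schema — from $\underline{x}=\underline{y}$ and $\varphi[\underline{x}]$ infer $\varphi[\underline{y}]$ — is realized by the term $\mathbf{r}^{\varphi[\underline{x}]}$ of Lemma~\ref{subfor}; symmetry, transitivity, and the compatibility of $=$ with $\in$ are the atomic instances already supplied by the remaining items of Lemma~\ref{not}. For the classical statement one adds to the calculus a single non-intuitionistic principle, say Peirce's law, and observes that when $\mathbb{A}$ is classical the element $\bigwedge_{a,b\in A}(((a\rightarrow b)\rightarrow a)\rightarrow a)$ lies in $\Sigma$ and realizes it uniformly; all other cases are unchanged.

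The main obstacle is not any single rule but the bookkeeping that makes the induction go through: one must check that the realizing terms can always be chosen uniformly in the parameters $\underline{\alpha}$, i.e.\ that they are built purely from the combinators and from the parameter-free separator elements $\mathbf{\rho}$, $\mathbf{r}^{\varphi[\underline{x}]}$ and (classically) the Peirce element, with the open hypotheses threaded through the context exactly as in the term-assignment discipline, and that the growing term-variable context under the quantifier rules is correctly tracked by the extra $\mathbf{W}$-index. Once this uniformity is secured, meeting over $\underline{\alpha}\in\mathbf{W}^{n}$ and invoking the closure properties of $\Sigma$ (upward closure together with the fact that a closed term built from separator elements has its interpretation in $\Sigma$) immediately yields the desired membership, which is the conclusion.
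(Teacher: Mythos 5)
Your proposal is correct and is essentially the argument the paper intends: the paper gives no details, saying only that the lemma follows from the previous results and the rules of Section~\ref{s:ImpAlgTrip}, and your induction on derivations---mapping each inference rule to the corresponding term-assignment rule, handling equality via Lemmas~\ref{not} and~\ref{subfor}, and using the Peirce element for the classical case---is exactly that outline made explicit.
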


\begin{lemma}\label{rest}If $[\underline{x}]$ has length $n$, then
$$\left\|\exists z\in y\,\varphi\,[\underline{x},y]\right\|\equiv_{\Sigma[\mathbf{W}^{n+1}]}\Lambda \underline{\alpha}.\Lambda \beta.\lar{\exists} _{u\in \partial_{0}(\beta)}\left(\beta(u)\times \left\|\varphi\,[\underline{x},y,z]\right\|(\underline{\alpha},\beta,u) \right)\footnote{We use the notation $\Lambda \alpha.f(\alpha)$ to denote the function sending each $\alpha$ in the domain to $f(\alpha)$.}$$
$$\left\|\forall z\in y\,\varphi\,[\underline{x},y]\right\|\equiv_{\Sigma[\mathbf{W}^{n+1}]}\Lambda \underline{\alpha}.\Lambda \beta.\lar{\forall} _{u\in \partial_{0}(\beta)}\left(\beta(u)\rightarrow \left\|\varphi\,[\underline{x},y,z]\right\|(\underline{\alpha},\beta,u) \right)$$
\end{lemma}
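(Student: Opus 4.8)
The plan is to unfold both sides using the abbreviations $\exists z\in y\,\varphi\equiv\exists z(z\in y\wedge\varphi)$ and $\forall z\in y\,\varphi\equiv\forall z(z\in y\rightarrow\varphi)$ together with the interpretation clauses, and then to exhibit, for each equivalence, realizers in $\Sigma$ for both directions of $\vdash_{\Sigma[\mathbf{W}^{n+1}]}$, built in the $\lambda$-calculus of Section~\ref{s:ImpAlgTrip} from the combinators of Lemma~\ref{not} and the substitution realizer $\mathbf{r}^{\varphi[\underline x,y,z]}$ of Lemma~\ref{subfor}. Clauses (vii),(i),(iv) give $\|\exists z\in y\,\varphi\,[\underline x,y]\|(\underline\alpha,\beta)=\lar{\exists}_{\gamma\in\mathbf{W}}\big((\gamma\in_{\mathbf{W}}\beta)\times\|\varphi\,[\underline x,y,z]\|(\underline\alpha,\beta,\gamma)\big)$, and dually for the universal. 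Comparing with the claimed right-hand sides, the only change is that the quantifier now ranges over the domain $\partial_0(\beta)$ weighted by $\beta(u)$, rather than over all of $\mathbf{W}$ guarded by the predicate $\gamma\in_{\mathbf{W}}\beta$; so the content of the lemma is exactly that, modulo $\Sigma$, quantifying over members of $\beta$ coincides with quantifying over $\partial_0(\beta)$ weighted by $\beta$.

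For the existential equivalence, the easy direction takes a witness $u\in\partial_0(\beta)$ and a realizer of $\beta(u)\times\|\varphi\|(\underline\alpha,\beta,u)$, instantiates the outer existential at $\gamma:=u$ (legitimate since $\partial_0(\beta)\subseteq\mathbf{W}$), promotes $\beta(u)$ to $u\in_{\mathbf{W}}\beta$ via $\mathbf{j}$ (Lemma~\ref{not}(ii)), pairs with $\mathbf{p}$ and injects with $\mathbf{e}$. The converse is the delicate direction: from a witness $\gamma\in\mathbf{W}$ and a realizer of $(\gamma\in_{\mathbf{W}}\beta)\times\|\varphi\|(\underline\alpha,\beta,\gamma)$, I would unfold $\gamma\in_{\mathbf{W}}\beta=\lar{\exists}_{t\in\partial_0(\beta)}(\beta(t)\times(t=_{\mathbf{W}}\gamma))$ and eliminate this inner existential, obtaining $t\in\partial_0(\beta)$ together with $\beta(t)$ and $t=_{\mathbf{W}}\gamma$; I would then transport $\|\varphi\|(\underline\alpha,\beta,\gamma)$ to $\|\varphi\|(\underline\alpha,\beta,t)$ by feeding $\mathbf{r}^{\varphi[\underline x,y,z]}$ the compound equality $\underline\alpha,\beta,\gamma=_{\mathbf{W}}\underline\alpha,\beta,t$, assembled from $\mathbf{\rho}$ in the first $n+1$ components and from $\mathbf{\sigma}$ applied to $t=_{\mathbf{W}}\gamma$ in the last, and finally pair $\beta(t)$ with the transported value and apply $\mathbf{e}$, giving the witness $t$ for the $\partial_0(\beta)$-form.

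The universal equivalence is dual. In the easy direction a realizer of the $\mathbf{W}$-form is instantiated at $\gamma:=u$, with $\beta(u)$ promoted to $u\in_{\mathbf{W}}\beta$ by $\mathbf{j}$ before being supplied as the hypothesis. In the other direction, given a realizer of the $\partial_0(\beta)$-form, an arbitrary $\gamma\in\mathbf{W}$ and a realizer of $\gamma\in_{\mathbf{W}}\beta$, I would eliminate the existential hidden in $\gamma\in_{\mathbf{W}}\beta$ to obtain $t$, $\beta(t)$ and $t=_{\mathbf{W}}\gamma$, apply the given realizer to $t$ and $\beta(t)$ to get $\|\varphi\|(\underline\alpha,\beta,t)$, and transport it to $\|\varphi\|(\underline\alpha,\beta,\gamma)$ with $\mathbf{r}^{\varphi[\underline x,y,z]}$ fed $\underline\alpha,\beta,t=_{\mathbf{W}}\underline\alpha,\beta,\gamma$, now using $t=_{\mathbf{W}}\gamma$ directly in the last component. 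In every case the resulting term is a closed pure $\lambda$-term applied to parameters in $\Sigma$, hence lies in $\Sigma$, which is precisely what $\vdash_{\Sigma[\mathbf{W}^{n+1}]}$ demands.

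The main obstacle is the transport step in the two harder directions: one must orient the equality $t=_{\mathbf{W}}\gamma$ correctly, inserting $\mathbf{\sigma}$ exactly where the direction of Lemma~\ref{subfor} requires it, and pad the remaining $n+1$ components with $\mathbf{\rho}$ so that the compound equality matches the left-associated $=_{\mathbf{W}}$ convention. Combined with the bookkeeping of the nested $\lar{\exists}$-eliminations, this is the only point requiring care, the remainder being routine applications of the derivation rules of Section~\ref{s:ImpAlgTrip}.
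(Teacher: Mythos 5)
Your proposal is correct and follows essentially the same route as the paper's proof: unfold the bounded quantifier via the shorthand, then realize each direction by eliminating the nested existentials, promoting $\beta(u)$ to $u\in_{\mathbf{W}}\beta$ with $\mathbf{j}$ (equivalently $\mathbf{e}(\mathbf{p}\,{\cdot}\,\rho)$) in the easy direction, and transporting along the compound equality built from $\rho$ and $\sigma(t=_{\mathbf{W}}\gamma)$ via $\mathbf{r}^{\varphi}$ in the delicate one. The paper only spells out the existential case with empty $\underline{x}$ and leaves the universal case to the reader, exactly as your dual sketch handles it.
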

\begin{proof}
We consider the case of the existential quantifier and we leave the analogous proof of the universal case to the reader. We also restrict to the case in which $\underline{x}$ is empty. The general case is analogous, but only heavier in notation.
By definition of the interpretation we have that 
$\left\|\exists z\in y\,\varphi\,[y]\right\|(\beta)$ is $$\eta:=\lar{\exists} _{\gamma \in \mathbf{W}}\left(\lar{\exists} _{u\in \partial_{0}(\beta)}(\beta(u)\times u=_{\mathbf{W}}\gamma)\times \left\|\varphi\,[y,z]\right\|(\beta,\gamma)\right)$$
We denote with $\eta_{1}(\gamma)$ the scope of the quantifier $\lar{\exists} _{\gamma \in \mathbf{W}}$, while we denote with $\eta_2(u,\gamma)$ the scope of the quantifier $\lar{\exists} _{u\in \partial_{0}(\beta)}$.
It is immediate to check that the following sequent holds for every $\beta,\gamma\in \mathbf{W}$ and every $u\in \partial_0(\beta)$: 
$$x:\eta, y:\eta_1(\gamma),z:\eta_2(u,\gamma)\vdash \mathbf{l}:(\beta=_{\mathbf{W}}\beta\times \gamma=_{\mathbf{W}}u)\times \left\|\varphi\,[y,z]\right\|(\beta,\gamma) )$$
where $\mathbf{l}:=\mathbf{p}(\mathbf{p}\rho(\sigma(\mathbf{p}_2z)))(\mathbf{p}_2y)$.

With the notation from Lemma \ref{subfor}, we can conclude that 
$$x:\eta, y:\eta_1(\gamma),z:\eta_2(u,\gamma)\vdash \mathbf{r}^{\varphi[y,z]}\mathbf{l}:\left\|\varphi\,[y,z]\right\|(\beta,u) $$ Thus $x:\eta, y:\eta_1(\gamma),z:\eta_2(u,\gamma)\vdash \mathbf{p}(\mathbf{p}_1z)(\mathbf{r}^{\varphi[y,z]}\mathbf{l}):\beta(u)\times\left\|\varphi\,[y,z]\right\|(\beta,u) $.
Hence
$$x:\eta, y:\eta_1(\gamma),z:\eta_2(u,\gamma)\vdash \mathbf{e}(\mathbf{p}(\mathbf{p}_1z)(\mathbf{r}^{\varphi[y,z]}\mathbf{l})):\exists_{w\in \partial_{0}(\beta)}(\beta(w)\times\left\|\varphi\,[y,z]\right\|(\beta,w))$$
Using the rules of elimination of existential quantification, one can conclude that 
$$\eta\rightarrow \exists_{w\in \partial_{0}(\beta)}(\beta(w)\times\left\|\varphi\,[y,z]\right\|(\beta,w))\geq \mathbf{l}'\in \Sigma$$
where $\mathbf{l}':=\lambda x.x\lambda y.(\mathbf{p}_1y)(\lambda z.\mathbf{e}(\mathbf{p}(\mathbf{p}_1z)(\mathbf{r}^{\varphi[y,z]}\mathbf{l})))$.

It is easier to show that for every $\beta\in \mathbf{W}$
$$ \exists_{w\in \partial_{0}(\beta)}(\beta(w)\times\left\|\varphi\,[y,z]\right\|(\beta,w))\rightarrow \eta \geq \lambda x.x\lambda y.\mathbf{e}(\mathbf{p}(\mathbf{e}(\mathbf{p}(\mathbf{p}_1y)\rho))(\mathbf{p}_2y))\in \Sigma$$
 One can in fact write a deduction tree in which the existential quantifiers of the consequent are both witnessed by a $w\in \partial_{0}(\beta)$ for which $\beta(w)\times\left\|\varphi\,[y,z]\right\|(\beta,w)$ is assumed to hold.

\end{proof}
\begin{corollary}\label{sub} $\left\|x\subseteq y\,[x,y]\right\|\equiv_{\Sigma[\mathbf{W}^2]}\Lambda \alpha.\Lambda \beta.\left(\alpha\subseteq_{\mathbf{W}}\beta\right)$.
\end{corollary}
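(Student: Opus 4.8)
The plan is to reduce the statement to Lemma~\ref{rest}, exploiting that $\|x\subseteq y\,[x,y]\|$ is, by the very definition of the shorthand $\subseteq$, literally $\|\forall z\in x\,(z\in y)\,[x,y]\|$ — an equality on the nose, not merely up to $\equiv_\Sigma$. So the whole argument is a chain of definitional unfoldings bracketed by a single appeal to Lemma~\ref{rest}.

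First I would apply the universal case of Lemma~\ref{rest} to the bounded formula $\forall z\in x\,(z\in y)$, with $x$ as the bounding variable and $z\in y$ as the matrix. The only point needing care is that Lemma~\ref{rest} is stated with the bounding variable placed last in the context, whereas here $x$ occurs first in $[x,y]$. I would absorb this by applying the lemma to the context read as $[y,x]$ — so that the bounding variable $x$ is last and is interpreted by $\alpha$, with bound set $\partial_0(\alpha)$ — and then reindexing along the coordinate swap $(\alpha,\beta)\mapsto(\beta,\alpha)$; since reindexing in the tripos $\mathsf{P}_{\mathbb{A}}$ is order-preserving, this move preserves $\equiv_{\Sigma[\mathbf{W}^2]}$. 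Lemma~\ref{rest} then yields
$$\|x\subseteq y\,[x,y]\|\equiv_{\Sigma[\mathbf{W}^2]}\Lambda\alpha.\Lambda\beta.\lar{\forall}_{u\in\partial_0(\alpha)}\bigl(\alpha(u)\rightarrow\|z\in y\,[x,y,z]\|(\alpha,\beta,u)\bigr).$$

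It then remains only to evaluate the atomic matrix. By the interpretation clause for membership, $\|z\in y\,[x,y,z]\|(\alpha,\beta,u)$ is exactly $u\in_{\mathbf{W}}\beta$ (again an on-the-nose equality, since $z$ is interpreted by $u$ and $y$ by $\beta$). Substituting this in turns the right-hand side into $\Lambda\alpha.\Lambda\beta.\lar{\forall}_{u\in\partial_0(\alpha)}(\alpha(u)\rightarrow u\in_{\mathbf{W}}\beta)$, whose inner meet is precisely $\alpha\subseteq_{\mathbf{W}}\beta$ by the definition of $\subseteq_{\mathbf{W}}$. This gives the desired $\|x\subseteq y\,[x,y]\|\equiv_{\Sigma[\mathbf{W}^2]}\Lambda\alpha.\Lambda\beta.(\alpha\subseteq_{\mathbf{W}}\beta)$. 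I do not expect any genuine obstacle: every step except the invocation of Lemma~\ref{rest} is a definitional unfolding, and the sole subtlety — matching the position of the bounding variable $x$ to the convention of Lemma~\ref{rest} — is purely cosmetic, being handled by the order-preserving reindexing of the tripos.
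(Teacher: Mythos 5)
Your proof is correct and is exactly the route the paper intends: the corollary is presented as an immediate consequence of Lemma~\ref{rest} (universal case) applied to the shorthand $\forall z\in x\,(z\in y)$, followed by unfolding the atomic clause for $\in$ and the definition of $\subseteq_{\mathbf{W}}$. Your extra care about the position of the bounding variable and the order-preserving reindexing along the coordinate swap is a valid (if pedantic) handling of a purely notational point the paper glosses over.
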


\subsection{Validity of axioms}
In this subsection we show that the interpretation we gave is in fact a model of $\mathbf{IZF}$ (when $\mathbb{A}$ is not classical) or a model of $\mathbf{ZF}$ (when $\mathbb{A}$ is classical), that is, if $\mathbf{(I)ZF}\vdash \varphi$, then $\mathbf{W}\vDash \varphi$. In order to show this we prove that every axiom of $\mathbf{(I)ZF}$ is valid in the interpretation.
\subsubsection{Extensionality}
Thanks to Corollary \ref{sub} we know that
$$\left\|\mathbf{Ext}\right\|\equiv_{\Sigma}\lar{\forall} _{\alpha\in \mathbf{W}}\lar{\forall} _{\beta\in \mathbf{W}}\left(\alpha\subseteq_{\mathbf{W}}\beta \times \beta\subseteq_{\mathbf{W}}\alpha\rightarrow \alpha=_{\mathbf{W}}\beta\right)\geq \lambda x.x \in \Sigma$$
Thus, $\mathbf{W}\vDash \mathbf{Ext}$.
\subsubsection{Pair}
$$\left\|\mathbf{Pair}\right\|\equiv\lar{\forall} _{\alpha\in \mathbf{W}}\lar{\forall} _{\beta\in \mathbf{W}}\lar{\exists} _{\gamma\in \mathbf{W}}\left(\alpha\in_{\mathbf{W}}\gamma \times \beta\in_{\mathbf{W}}\gamma\right)$$
Let us consider arbitrary $\alpha,\beta\in \mathbf{W}$ and the partial function $\eta_{\alpha,\beta}\in \mathbf{W}$ defined as follows: $\partial_{0}(\eta_{\alpha,\beta})=\{\alpha, \beta\}$ and $\eta_{\alpha,\beta}(u)=\top$ for every $u$ in the domain. By Lemma \ref{not}, $\vdash\mathbf{\rho}: \alpha=_{\mathbf{W}}\alpha$, from which it follows that 
$$\vdash\mathbf{q}:=\mathbf{e}(\mathbf{p}\top\rho): \exists_{t\in \{\alpha,\beta\}}\left(\top\times t=_{\mathbf{W}}\alpha\right)=\alpha\in_{\mathbf{W}}\eta_{\alpha,\beta}$$ In the same way, we can show that $\vdash\mathbf{q}: \beta\in_{\mathbf{W}}\eta_{\alpha,\beta}$. As a consequence 
$$\vdash\mathbf{q}':=\mathbf{p}\mathbf{q}\mathbf{q}: \alpha\in_{\mathbf{W}}\eta_{\alpha,\beta}\times\beta\in_{\mathbf{W}}\eta_{\alpha,\beta}$$ Thus,  $\vdash\mathbf{e}\mathbf{q}': \lar{\exists} _{\gamma\in \mathbf{W}}\left(\alpha\in_{\mathbf{W}}\gamma \times \beta\in_{\mathbf{W}}\gamma\right)$.  Since $\mathbf{e}\mathbf{q}'$ does not depend on $\alpha$ and $\beta$ and belongs to $\Sigma$,  
$$\vdash\mathbf{e}\mathbf{q}': \lar{\forall} _{\alpha\in \mathbf{W}}\lar{\forall} _{\beta\in \mathbf{W}}\lar{\exists} _{\gamma\in \mathbf{W}}\left(\alpha\in_{\mathbf{W}}\gamma \times \beta\in_{\mathbf{W}}\gamma\right)$$
Hence $\mathbf{W}\vDash \mathbf{Pair}$.

\subsubsection{Union}
Thanks to Lemma \ref{rest} we know that
$$\left\|\mathbf{Union}\right\|\equiv_{\Sigma}\lar{\forall} _{\alpha\in \mathbf{W}}\lar{\exists} _{\beta\in \mathbf{W}}\lar{\forall} _{u\in \partial_{0}(\alpha)}\left(\alpha(u)\rightarrow\lar{\forall} _{w\in \partial_{0}(u)}\left(u(w)\rightarrow w\in_{\mathbf{W}}\beta\right)\right)$$
Let us fix now an arbitrary $\alpha\in \mathbf{W}$ and let us define $\zeta_{\alpha}\in \mathbf{W}$ as follows. The domain of $\zeta_{\alpha}$ is $\bigcup_{u\in \partial_{0}(\alpha)}\partial_{0}(u)$ and for every element of such domain $\zeta_{\alpha}(u)=\top$.

Let $w\in \bigcup_{u\in \partial_{0}(\alpha)}\partial_{0}(u)$, then $\vdash\mathbf{p}\top \mathbf{\rho}: \top\times w=_{\mathbf{W}}w$ (we are using Lemma \ref{not}), from which it follows that $\vdash\mathbf{e}(\mathbf{p}\top \mathbf{r})): w\in_{\mathbf{W}}\zeta_{\alpha}$. From this it follows that 
$$\vdash\lambda v'.\lambda v.\mathbf{e}(\mathbf{p}\top \mathbf{r})): \lar{\forall} _{u\in \partial_{0}(\alpha)}\left(\alpha(u)\rightarrow\lar{\forall} _{w\in \partial_{0}(u)}\left(u(w)\rightarrow w\in_{\mathbf{W}}\zeta_{\alpha}\right)\right)$$
and thus that 
$$\vdash\mathbf{e}(\lambda v'.\lambda v.\mathbf{e}(\mathbf{p}\top \mathbf{r}))):  \lar{\exists} _{\beta\in \mathbf{W}}\lar{\forall} _{u\in \partial_{0}(\alpha)}\left(\alpha(u)\rightarrow\lar{\forall} _{w\in \partial_{0}(u)}\left(u(w)\rightarrow w\in_{\mathbf{W}}\beta\right)\right)$$
Since $\mathbf{e}(\lambda v'.\lambda v.\mathbf{e}(\mathbf{p}\top \mathbf{r})))$ does not depend on $\alpha$ and it is an element of $\Sigma$, we have 
$$\vdash\mathbf{e}(\lambda v'.\lambda v.\mathbf{e}(\mathbf{p}\top \mathbf{r}))): \lar{\forall} _{\alpha\in \mathbf{W}}\lar{\exists} _{\beta\in \mathbf{W}}\lar{\forall} _{u\in \partial_{0}(\alpha)}\left(\alpha(u)\rightarrow\lar{\forall} _{w\in \partial_{0}(u)}\left(u(w)\rightarrow w\in_{\mathbf{W}}\beta\right)\right)$$
and $\mathbf{W}\vDash \mathbf{Union}$.
\subsubsection{Powerset}
Using Corollary \ref{sub}, $\left\|\mathbf{Pow}\right\|\equiv_{\Sigma}\lar{\forall} _{\alpha\in \mathbf{W}}\lar{\exists} _{\beta\in \mathbf{W}}\lar{\forall} _{\gamma\in \mathbf{W}}\left(\gamma\subseteq_{\mathbf{W}} \alpha\rightarrow \gamma\in_{\mathbf{W}} \beta\right)$.

Let us consider an arbitrary $\alpha\in \mathbf{W}$ and define $\pi_{\alpha}\in \mathbf{W}$ as that partial function having domain $A^{\partial_0(\alpha)}$ and for which $\pi_{\alpha}(u)=\top$ for every $u$ in the domain.  

For every $\gamma\in \mathbf{W}$ we also define $\gamma_{\alpha}\in \mathbf{W}$ as follows. The domain of $\gamma_{\alpha}$ is $\partial_{0}(\alpha)$ and $\gamma_{\alpha}(u):=u\in_{\mathbf{W}}\alpha \times u\in_{\mathbf{W}}\gamma$ for every $u$ in the domain. 

We now use Lemma \ref{not} and its notation.
Let $u\in \partial_{0}(\gamma)$ and $t\in \partial_0(\alpha)$. Then:
\begin{enumerate}
\item $x: \gamma\subseteq_{\mathbf{W}}\alpha, y: \gamma(u), z:\alpha(t)\times t=_{\mathbf{W}}u  \vdash \mathbf{j}(\mathbf{p}_1z):t\in_{\mathbf{W}}\alpha $
\item $x: \gamma\subseteq_{\mathbf{W}}\alpha, y: \gamma(u), z:\alpha(t)\times t=_{\mathbf{W}}u  \vdash \mathbf{p}_2z:t=_{\mathbf{W}}u$
\item $x: \gamma\subseteq_{\mathbf{W}}\alpha, y: \gamma(u), z:\alpha(t)\times t=_{\mathbf{W}}u  \vdash \mathbf{s}_2(\mathbf{p}(\sigma(\mathbf{p}_2z))(\mathbf{j}y)):t\in_{\mathbf{W}}\gamma$
\end{enumerate}
 From this it follows that 
$$x: \gamma\subseteq_{\mathbf{W}}\alpha, y: \gamma(u), z:\alpha(t)\times t=_{\mathbf{W}}u  \vdash \widetilde{\mathbf{r}}:=\mathbf{e}(\mathbf{p}(\mathbf{p}(\mathbf{j}(\mathbf{p}_1z))(\mathbf{s}_2(\mathbf{p}(\sigma(\mathbf{p}_2z))(\mathbf{j}y))))(\mathbf{p}_2z)):u\in_{\mathbf{W}}\gamma_\alpha $$
Since $$x: \gamma\subseteq_{\mathbf{W}}\alpha, y: \gamma(u)\vdash xy: u\in_{\mathbf{W}}\alpha$$
then 
$$x: \gamma\subseteq_{\mathbf{W}}\alpha, y: \gamma(u)\vdash xy(\lambda z.\widetilde{\mathbf{r}}):u\in_{\mathbf{W}}\gamma_\alpha$$
 From this it follows that 
 $$x: \gamma\subseteq_{\mathbf{W}}\alpha\vdash \lambda y.(xy(\lambda z.\widetilde{\mathbf{r}})):\gamma\subseteq_{\mathbf{W}}\gamma_\alpha$$ 
One can also easily show that $\vdash \lambda z.(\mathbf{p}_2z): \gamma_{\alpha}\subseteq_{\mathbf{W}} \gamma$.
Thus 
$$x: \gamma\subseteq_{\mathbf{W}}\alpha\vdash\overline{\mathbf{r}}:=\mathbf{p}\top(\mathbf{p}(\lambda z.(\mathbf{p}_2z))(\lambda y.(xy(\lambda z.\widetilde{\mathbf{r}})))): \top \times \gamma_{\alpha}=_{\mathbf{W}}\gamma$$
Since $\gamma_{\alpha}$ is in the domain of $\pi_\alpha$ we hence have that 
$$x: \gamma\subseteq_{\mathbf{W}}\alpha\vdash\mathbf{e}\overline{\mathbf{r}}: \gamma \in \pi_{\alpha}$$
We can thus conclude that $\vdash\lambda x.\mathbf{e}\overline{\mathbf{r}}:\gamma\subseteq_{\mathbf{W}}\alpha\rightarrow \gamma\in_{\mathbf{W}}\pi_{\alpha}$.
Since $\lambda x.\mathbf{e}\overline{\mathbf{r}}$ and $\mathbf{e}(\lambda x.\mathbf{e}\overline{\mathbf{r}})$ do not depend on $\gamma$ and $\alpha$ we get, 
$$\vdash\mathbf{e}(\lambda x.\mathbf{e}\overline{\mathbf{r}}): \lar{\forall} _{\alpha\in \mathbf{W}}\lar{\exists} _{\beta\in \mathbf{W}}\lar{\forall} _{\gamma\in \mathbf{W}}\left(\gamma\subseteq_{\mathbf{W}} \alpha\rightarrow \gamma\in_{\mathbf{W}} \beta\right)$$
Since $\mathbf{e}(\lambda x.\mathbf{e}\overline{\mathbf{r}})\in \Sigma$, we can conclude that $\mathbf{W}\vDash \mathbf{Pow}$.

\subsubsection{Infinity}
For every $n\in \omega$, we define $\widehat{n}\in \mathbf{W}$ as follows: $\partial_0(\widehat{n})=\{\widehat{m}|\,m<n\}$ and $\widehat{n}(\widehat{m}):=\overline{m}$ where $\overline{m}\in \Sigma$ is Church's encoding of the natural number $m$\footnote{$\overline{0}:=\lambda x.\lambda y.x$ and $\overline{n+1}:=\overline{s}\,\overline{n}$ where $\overline{s}:\lambda z.\lambda x.\lambda y.y(zxy)$}.
We define $\widehat{\omega}$ as the element of $\mathbf{W}$ with domain $\{\widehat{n}|\,n\in \omega\}$ and defined by $\widehat{\omega}(\widehat{n}):=\overline{n}$.

First, if we consider $\widehat{0}=\emptyset$ and we use Lemma \ref{rest}, we can easily see that 
$$\vdash\mathbf{e}(\mathbf{p}\overline{0}\top):\lar{\exists}_{n\in \omega}(\widehat{\omega}(\widehat{n})\times \lar{\forall}_{m<n}(\widehat{n}(\widehat{m})\rightarrow \bot))\equiv_{\Sigma}\left\|\mathbf{Inf}_1(u)[u]\right\|(\widehat{\omega})$$

Moreover, one can construct a closed $\lambda$-term $f$ whose interpretation is in $\Sigma$ such that for every $n,m\in \omega$
$$\begin{cases}f\overline{n}\,\overline{m}\twoheadrightarrow_{\beta}\mathbf{j}_1(\mathbf{e}(\mathbf{p}\overline{m}\rho))\textrm{ if }\overline{m}\neq \overline{n}\\ f\overline{n}\,\overline{m}\twoheadrightarrow_{\beta}\mathbf{j}_2\rho \textrm{ if }\overline{m}=\overline{n}\end{cases}$$

Then, for every $n\in \omega$
$$\vdash \lambda u.f\overline{n}u:\lar{\forall} _{i<n}(\widehat{n+1}(\widehat{i})\rightarrow ((\widehat{i}\in_{\mathbf{W}}\widehat{n})+(\widehat{i}=_{\mathbf{W}}\widehat{n})))$$
Moreover
$$\vdash \lambda x.\mathbf{e}(\mathbf{p}x\rho):\widehat{n}\subseteq_{\mathbf{W}}\widehat{n+1}$$
$$\vdash \mathbf{e}(\mathbf{p}\overline{n}\rho):\widehat{n}\in_{\mathbf{W}}\widehat{n+1}$$
Using these facts, Lemma \ref{rest} and Corollary \ref{sub}, one can easily show that $\left\|\mathbf{Inf}_{2}(u)[u]\right\|(\widehat{\omega})\in \Sigma$.

Thus we can conclude that $\mathbf{W}\vDash \mathbf{Inf}$.

\subsubsection{Separation}
Assume $\varphi\,[\underline{w},x,z]$ be a formula in context with $\underline{w}$ a list of variable of length $n$.
$$\left\|\mathbf{Sep}_{\varphi}\right\|\equiv_{\Sigma}\lar{\forall} _{\underline{\omega}\in \mathbf{W}^{n}}\lar{\forall} _{\alpha\in \mathbf{W}}\lar{\exists} _{\beta\in \mathbf{W}}\Big(
\lar{\forall} _{u\in \partial_{0}(\beta)}(\beta(u)\rightarrow u\in_{\mathbf{W}}\alpha\times \left\|\varphi\,[\underline{w},x,z]\right\|(\underline{\omega},\alpha,u))\times$$ 
$$\lar{\forall} _{u'\in \partial_{0}(\alpha)}(\alpha(u')\rightarrow ( \left\|\varphi\,[\underline{w},x,z]\right\|(\underline{\omega},\alpha,u')\rightarrow u'\in_{\mathbf{W}}\beta))\Big)$$
For an arbitrary $\alpha\in \mathbf{W}$ and $\underline{\omega}\in \mathbf{W}^n$ we define $\alpha_{\varphi}^{\underline{\omega}}\in \mathbf{W}$ as follows: its domain is equal to the domain of $\alpha$, while $\alpha_{\varphi}^{\underline{\omega}}(u):=\alpha(u)\times \left\|\varphi\,[\underline{w},x,z]\right\|(\underline{\omega},\alpha,u)$.
In order to show that $\mathbf{W}\vDash \mathbf{Sep}_{\varphi}$, it is sufficient to find a $t\in \Sigma$ not depending on $\overline{\omega}$ and $\alpha$ such that 
$$\vdash t: \lar{\forall} _{u\in \partial_{0}(\alpha)}(\alpha_{\varphi}^{\overline{\omega}}(u)\rightarrow u\in_{\mathbf{W}}\alpha\times \left\|\varphi\,[\underline{w},x,z]\right\|(\underline{\omega},\alpha,u))\times$$ 
$$\lar{\forall} _{u'\in \partial_{0}(\alpha)}(\alpha(u')\rightarrow ( \left\|\varphi\,[\underline{w},x,z]\right\|(\underline{\omega},\alpha,u')\rightarrow u'\in_{\mathbf{W}}\alpha_{\varphi}^{\underline{\omega}}))$$
But this is immediate to prove, since using Lemma \ref{not}
$$\vdash\lambda x.\mathbf{p}(\mathbf{j}(\mathbf{p}_1x))(\mathbf{p}_2x): \lar{\forall} _{u\in \partial_{0}(\alpha)}(\alpha_{\varphi}^{\overline{\omega}}(u)\rightarrow u\in_{\mathbf{W}}\alpha\times \left\|\varphi\,[\underline{w},x,z]\right\|(\underline{\omega},\alpha,u))$$
$$\vdash\lambda x.\lambda y.\mathbf{e}(\mathbf{p}(\mathbf{p}xy)\mathbf{\rho}): \lar{\forall} _{u'\in \partial_{0}(\alpha)}(\alpha(u')\rightarrow ( \left\|\varphi\,[\underline{w},x,z]\right\|(\underline{\omega},\alpha,u')\rightarrow u'\in_{\mathbf{W}}\alpha_{\varphi}^{\overline{\omega}}))$$

\subsubsection{$\in$-Induction}
We now consider the axiom schema of $\in$-induction and we restrict to the case of a formula in context $\varphi[x]$ since the general case is analogous, but just heavier in notation.
Let $\mathbf{y}$ be the fix-point operator we have already used in the proof of  Lemma \ref{not} such that $\mathbf{y}f$ $\beta$-reduces to $f(\mathbf{y}f)$ for every $f$ and consider 
$$\mathbf{h}:=\mathbf{y}(\lambda h.\lambda x.x(\lambda y.hx))\in \Sigma$$
in such a way that $\mathbf{h}\leq (\lambda h.\lambda x.x(\lambda y.hx))\mathbf{h}\leq \lambda x.x(\lambda y.\mathbf{h}x)$.

Fix an arbitrary $\overline{\alpha}$ and assume that 
$$\mathbf{h}\leq \lar{\forall} _{\alpha\in \mathbf{W}}\left(\lar{\forall} _{u\in \partial_{0}(\alpha)}(\alpha(u)\rightarrow \left\|\varphi[x]\right\|(u))\rightarrow \left\|\varphi[x]\right\|(\alpha)\right)\rightarrow \left\|\varphi[x]\right\|(\beta)$$ for every $\beta$ with rank strictly less than that of $\overline{\alpha}$.
Let us use $\varepsilon^{\alpha}$ as a shorthand for 
$$\lar{\forall} _{u\in \partial_{0}(\alpha)}(\alpha(u)\rightarrow \left\|\varphi[x]\right\|(u))\rightarrow \left\|\varphi[x]\right\|(\alpha)$$ 
and $\varepsilon$ as a shorthand for $\lar{\forall} _{\alpha\in \mathbf{W}}\varepsilon^{\alpha}$.

If we consider the following derivation tree

$$\cfrac{\cfrac{\cfrac{x:\varepsilon\vdash x:\varepsilon}{x:\varepsilon\vdash x:\varepsilon^{\overline{\alpha}}}\qquad \cfrac{\cfrac{\cfrac{\cfrac{\cfrac{\vdash \mathbf{h}:\varepsilon\rightarrow  \left\|\varphi\,[x]\right\|(u)\,(\textrm{ for every }u\in \partial_{0}(\overline{\alpha}))}{x: \varepsilon\vdash \mathbf{h}:\varepsilon\rightarrow  \left\|\varphi\,[x]\right\|(u)(\textrm{ for every }u\in \partial_{0}(\overline{\alpha}))}\qquad x:\varepsilon\vdash x:\varepsilon}{x:\varepsilon\vdash \mathbf{h}x: \left\|\varphi\,[x]\right\|(u)(\textrm{ for every }u\in \partial_{0}(\overline{\alpha}))}}{x:\varepsilon,y: \overline{\alpha}(u)\vdash \mathbf{h}x: \left\|\varphi\,[x]\right\|(u)(\textrm{ for every }u\in \partial_{0}(\overline{\alpha}))}}{x:\varepsilon\vdash \lambda y.\mathbf{h}x: \overline{\alpha}(u)\rightarrow \left\|\varphi\,[x]\right\|(u)(\textrm{ for every }u\in \partial_{0}(\overline{\alpha}))}}{x:\varepsilon\vdash \lambda y.\mathbf{h}x: \lar{\forall} _{u\in \partial_{0}(\overline{\alpha})}(\overline{\alpha}(u)\rightarrow \left\|\varphi\,[x]\right\|(u))}}{x:\varepsilon\vdash x(\lambda y.\mathbf{h}x):\left\|\varphi[x]\right\|(\overline{\alpha})}}{\vdash \lambda x.x(\lambda y.\mathbf{h}x):\varepsilon\rightarrow \left\|\varphi[x]\right\|(\overline{\alpha})}$$

we can conclude that $\mathbf{h}\leq \varepsilon\rightarrow \left\|\varphi[x]\right\|(\overline{\alpha})$.

By transfinite induction we can hence conclude that 
$$\mathbf{h}\leq \lar{\forall}_{\alpha\in \mathbf{W}}(\varepsilon\rightarrow \left\|\varphi[x]\right\|(\alpha))$$
Since $\mathbf{h}\in \Sigma$ and, by using lemmas \ref{intlog} and \ref{rest}, $\lar{\forall}_{\alpha\in \mathbf{W}}(\varepsilon\rightarrow \left\|\varphi[x]\right\|(\alpha))\equiv_{\Sigma}\left\|\in\textrm{-}\mathbf{Ind}_{\varphi}\right\|$, we can conclude that $\mathbf{W}\vDash \in\textrm{-}\mathbf{Ind}_{\varphi}$.

\subsubsection{Collection}
In order to lighten the notation we will consider $\mathbf{Col}_{\varphi}$ for a formula $\varphi$ in context $[x,y]$ (so without any additional parameter). Moreover we will write $\varphi(a,b)$ instad of $\left\|\varphi\,[x,y]\right\|(a,b)$.

Assume $\alpha\in \mathbf{W}$ and $u\in\partial_0(\alpha)$. Since $\kappa$ is inaccessible, $|A|<\kappa$ and $\{\varphi(u,\gamma)|\,\gamma\in \mathbf{W}\}\subseteq A$, there exists $\eta<\kappa$ such that $\lar{\exists} _{\gamma\in \mathbf{W}}(\top\times \varphi(u,\gamma))=\lar{\exists} _{\gamma\in W^{\mathbb{A}}_{\eta}}(\top \times \varphi(u,\gamma))$. We define $\eta_{u}$ to be the minimum such an $\eta$ and we define $\overline{\eta}_{\alpha}:=\bigvee\{\eta_u|\,u\in \partial_{0}(\alpha)\}$ which is strictly less than $\kappa$, since the cardinality of $\partial_0(\alpha)$ is strictly less than $\kappa$ (because $\kappa$ is strongly inaccessible). 
We define $\beta_{\alpha}\in \mathbf{W}$ as the constant function with value $\top$ and domain $W^{\mathbb{A}}_{\overline{\eta}_{\alpha}}$.  Using the calculus we can show that there is an element $r\in \Sigma$ not depending on $\alpha$ such that 
$$\vdash r:\lar{\forall} _{u\in \partial_{0}(\alpha)}\left(\alpha(u)\rightarrow \lar{\exists} _{\gamma\in \mathbf{W}}\varphi(u,\gamma)\right)\rightarrow$$
$$\qquad\qquad\qquad\qquad\qquad\qquad\qquad \lar{\forall} _{u\in \partial_{0}(\alpha)}\left(\alpha(u)\rightarrow \lar{\exists} _{w\in\partial_{0}(\beta_{\alpha})}(\beta_{\alpha}(w)\times\varphi(u,w))\right)$$
and using this fact one can easily show that $\mathbf{Col}_{\varphi}$ is validated in the model.

\section{Relationship with forcing and realizability models of set theory}
\label{s:relationOther}

In this section, we show that the implicative models of $\mathbf{(I)ZF}$ constructed in the previous section encompass Heyting/Boolean-valued models for $\mathbf{(I)ZF}$ \cite{bell,BELL2} and, up to logical equivalence, Friedman/Rosolini/McCarty realizability models for $\mathbf{IZF}$ \cite{Friedman,Rosolini,McCarty} as well as Krivine's realizability models of $\mathbf{ZF}$~\cite{KRIZF1,KRIZF2}.

\subsection{The case of forcing}
When the parameterizing implicative algebra $\mathbb{A}$ of our model is a complete Heyting/Boolean algebra (with a separator reduced to $\{\top\}$), existential quantifications $\lar{\exists}_{i\in I}a_i$ coincide with suprema $\bigvee_{i\in I}a_i$ whereas implicative conjunctions $a\times b$ coincide with binary meets $a\wedge b$, as shown in~\cite{miq1}.
So that in this case, our implicative model of set theory boils down to the Heyting/Boolean-valued model of $\mathbf{(I)ZF}$ induced by $\mathbb{A}$, such as described e.g.\ in \cite{bell,BELL2}.
Therefore forcing models of set theory (both in intuitionistic and classical logic) appear to be instances of our construction.

\subsection{The case of intuitionistic realizability}\label{ss:CaseIntReal}
The case of intuitionistic realizability corresponds to the implicative algebras~$\mathbb{A}$ that are \emph{compatible with joins}, namely: the implicative algebras satisfying the additional requirement that
$$\bigwedge_{i\in I}(a_i\rightarrow b)
=\Bigl(\bigvee_{i\in I}a_i\Bigr)\rightarrow b$$
for every family $(a_{i})_{i\in I}$ of elements of $\mathbb{A}$ and for every~$b$ in $\mathbb{A}$.

Typical examples of implicative algebras that are compatible with joins are the ones coming from forcing (i.e.\ complete Heyting/Boolean algebras with a separator reduced to $\{\top\}$) as well as the implicative algebras induced by \emph{combinatory algebras} (CAs) or by \emph{ordered combinatory algebras} (OCAs).
On the other hand, the implicative algebras coming from classical realizability are in general not compatible with joins.
Note that unlike (possibly ordered) combinatory algebras, \emph{partial combinatory algebras} (PCAs) do not induce (full) implicative algebras, but \emph{quasi-implicative algebras}~\cite{miq1}, in which one may have $(\top\rightarrow\top)\neq\top$.
However, as shown in~\cite{miq1}, it is always possible to complete a quasi-implicative algebra into an implicative algebra, simply by adding an extra top element, and without changing the underlying logic.
(Indeed, the triposes associated to a quasi-implicative algebra and to its completion are isomorphic.)
Moreover, when applying this completion mechanism to a quasi-implicative algebra that comes from a PCA, the resulting implicative algebra is always compatible with joins.

In an implicative algebra~$\mathbb{A}$ that is compatible with joins, existential quantification $\lar{\exists}$ may not coincide with supremum $\bigvee$, but both constructions are logically equivalent in the sense that
$$\bigwedge_{(b_{i})_{i\in I}}\!\!
\left(\mathop{\lar{\exists}}\limits_{i\in I}b_i\rightarrow
\bigvee_{i\in I}b_i\right)~\in~\Sigma
\qquad\textrm{and}\qquad
\bigwedge_{(b_{i})_{i\in I}}\!\!
\left(\bigvee_{i\in I}b_i\rightarrow
\mathop{\lar{\exists}}\limits_{i\in I}b_i\right)~\in~\Sigma
\footnote{Note that the first of this two properties actually holds in any implicative algebra.}$$

Now, if we define on $\mathbf{W}$ a new interpretation $\left\|-\right\|^{\J}$ of the language of set theory replacing $\lar{\exists}$ by $\bigvee$ in the definition of the interpretation $\left\|-\right\|$, we easily show (by a straightforward induction) that for all formulas in context $\varphi\,[\underline{x}]$, both denotations $\left\|\varphi\,[\underline{x}]\right\|$ and $\left\|\varphi\,[\underline{x}]\right\|^{\J}$ are equivalent, in the sense that
$$\left\|\varphi\,[\underline{x}]\right\|
~\vdash_{\Sigma[\mathbf{W}^n]}~\left\|\varphi\,[\underline{x}]\right\|^{\J}
\qquad\text{and}\qquad
\left\|\varphi\,[\underline{x}]\right\|^{\J}
~\vdash_{\Sigma[\mathbf{W}^n]}~\left\|\varphi\,[\underline{x}]\right\|\,,$$
where $n$ is the length of $\underline{x}$.
The main interest of the new interpretation is that when the implicative algebra~$\mathbb{A}$ comes from a CA, an OCA, or even a PCA through the completion mechanism mentioned above, the alternative interpretation $\left\|-\right\|^{\J}$ coincides exactly with the Friedman/Rosolini/McCarty realizability interpretation \cite{Friedman,Rosolini,McCarty}. Therefore, intuitionistic realizability models appear to be equivalent to some instances of our construction.

\subsection{The case of classical realizability}
As shown by Krivine~\cite{KRIZF1,KRIZF2}, classical realizability models of $\mathbf{ZF}$ can be constructed from \emph{classical realizability algebras} \cite{KRI11}, or from the (slightly simpler) \emph{abstract Krivine structures} (AKSs) introduced by Streicher in~\cite{STR13}.
Again, both structures are easily reformulated as implicative algebras~\cite{miq1}, which makes possible to compare Krivine's model construction with ours.
Moreover, it has been shown in~\cite{miq1} that every classical implicative algebra is equivalent (from the point of view of the induced triposes) to some Streicher's AKS, which shows that ---at least from a conceptual point of view--- the models arising from classical implicative algebras are essentially the same as the ones arising from classical realizability.
However, the relationship between Krivine's classical realizability models of ZF and our implicative models of ZF is much more intricate than in the intuitionistic case, due to reasons of \emph{polarity} we now need to explain.

For that, let us first recall that in our construction, a name in~$\mathbf{W}$ is a partial function $\alpha\in\mathsf{Part}(\mathbf{W},A)$ associating to each name $\beta\in\partial_0(\alpha)$ (in the domain of~$\alpha$) a truth value $\alpha(\beta)\in A$ that intuitively expresses `how much $\beta$ belongs to $\alpha$'.
So that when $\beta\notin\partial_0(\alpha)$, it is convenient to think that such a truth value implicitly defaults to $\bot$ (that is: `$\beta$ does not belong to~$\alpha$').

As a matter of fact, Krivine's classical realizability interpretation of ZF can be carried out entirely within the same universe~$\mathbf{W}$ as our interpretation---under the hypothesis that the parameterizing implicative algebra~$\mathbb{A}$ is classical, of course.
However, the crucial point is that in Krivine's framework, the elements of~$\mathbf{W}$ have definitely not the same meaning as in ours, since for any two names $\alpha\in\mathbf{W}$ and $\beta\in\partial_0(\alpha)$, the truth value $\alpha(\beta)\in A$ expresses `how much $\beta$ \emph{does not} belong to~$\alpha$' (according to Krivine).
So that when $\beta\notin\partial_0(\alpha)$, such a truth value now implicitly defaults to $\top$.

Formally, Krivine's classical realizability interpretation, written $\left\|-\right\|^{\K}$, takes place in a variant of the language of set theory where the membership predicate $\in$ has been replaced by a \emph{negated membership predicate} $\notin$, from which the usual membership predicate is defined by $x\in y:\equiv\lnot(x\notin y)$.
To each pair of names $\alpha,\beta\in\mathbf{W}$, Krivine associates two truth values $\alpha\notin_{\mathbf{W}}^{\K}\beta$ and $\alpha=_{\mathbf{W}}^{\K}\beta$, that are defined (again) by induction on the ranks of~$\alpha$ and~$\beta$, letting:
$$\begin{array}{r@{~{}~}c@{~{}~}l}
  \alpha\notin_{\mathbf{W}}^{\K}\beta&:=&
  \lar{\forall}_{t\in\partial_0(\beta)}(t=_{\mathbf{W}}^{\K}\alpha\to\beta(t))\\
  \alpha=_{\mathbf{W}}^{\K}\beta&:=&
  (\alpha\subseteq_{\mathbf{W}}^{\K}\beta)\times
  (\beta\subseteq_{\mathbf{W}}^{\K}\alpha),\quad\textrm{where:}\quad
  (\alpha\subseteq_{\mathbf{W}}^{\K}\beta)~:=~
  \lar{\forall}_{t\in\partial_0(\alpha)}(t\notin_{\mathbf{W}}^{\K}\beta\to\alpha(t))\\
\end{array}$$
Notice that Krivine's definition of $\alpha\notin_{\mathbf{W}}^{\K}\beta$ corresponds to the negation of our definition of $\alpha\in_{\mathbf{W}}\beta$, whereas his definition of $\alpha\subseteq_{\mathbf{W}}^{\K}\beta$ is exactly the contraposition of our definition of $\alpha\subseteq_{\mathbf{W}}\beta$---keeping in mind that in Krivine's setting, $\beta(t)$ and $\alpha(t)$ have the same meaning as $\lnot\beta(t)$ and $\lnot\alpha(t)$ in ours%
\footnote{The main benefit of focusing on $\notin$ rather than on~$\in$ is that the recursive interpretations of $\notin$ and $\subseteq$ only rely on universal quantification, whose interpretation is way simpler than existential quantification. The cost of such a design is that it relies on many contrapositions, that require classical reasoning.}.
Once the three primitive relations $\alpha\notin_{\mathbf{W}}^{\K}\beta$, $\alpha\subseteq_{\mathbf{W}}^{\K}\beta$ and $\alpha=_{\mathbf{W}}^{\K}\beta$ have been recursively defined, the usual notion of membership is then recovered letting $\alpha\in_{\mathbf{W}}^{\K}\beta:=\lnot(\alpha\notin_{\mathbf{W}}^{\K}\beta)$, and the rest of the interpretation (written $\left\|-\right\|^{\K}$) is defined the same way as in our framework (cf Section~\ref{s:ImpModels}).

Now if we want to relate Krivine's interpretation with ours, we need to formalize the fact that the same name $\alpha\in\mathbf{W}$ has different meanings according to Krivine and according to us.
For that, we introduce a \emph{set-negation operator} $(\alpha\mapsto\tilde{\alpha}):\mathbf{W}\to\mathbf{W}$ that is defined by induction on the rank of~$\alpha\in\mathbf{W}$, letting:
$$\tilde{\alpha}~:=~
\bigl\{\bigl(\tilde\beta,\lnot\alpha(\beta)\bigr)~:~
\beta\in\partial_0(\alpha)\bigr\}$$
Intuitively, this operator associates to each name $\alpha\in\mathbf{W}$ another name $\tilde{\alpha}\in\mathbf{W}$ that has the same meaning in our framework (resp.\ in Krivine's framework) as~$\alpha$ in Krivine's (resp.\ in ours).
In what follows, it is also convenient to consider set-negation as a unary function symbol as well, that is written and interpreted as the set-negation operator $(\alpha\mapsto\tilde{\alpha}):\mathbf{W}\to\mathbf{W}$.

Using the fact that the parameterizing implicative algebra~$\mathbb{A}$ is classical, we easily check that set-negation is involutive (w.r.t.\ both interpretations), in the sense that:
$$\|\forall x\,(x=\tilde{\tilde{x}})\|~\in~\Sigma
\qquad\textrm{and}\qquad
\|\forall x\,(x=\tilde{\tilde{x}})\|^{\K}~\in~\Sigma\,.$$
We can now prove that both interpretations $\|{-}\|$ and $\|{-}\|^{\K}$ are equivalent \emph{up to set-negation of parameters}, in the sense that for all formulas in context $\varphi\,[\underline{x}]$, we have:
$$\left\|\varphi\,[\underline{\tilde{x}}]\right\|
~\vdash_{\Sigma[\mathbf{W}^n]}~\left\|\varphi\,[\underline{x}]\right\|^{\K}
\qquad\text{and}\qquad
\left\|\varphi\,[\underline{x}]\right\|^{\K}
~\vdash_{\Sigma[\mathbf{W}^n]}~\left\|\varphi\,[\underline{\tilde{x}}]\right\|\,,$$
where $n$ is the length of $\underline{x}$, and writing $\varphi[\underline{\tilde{x}}]$ for $(\varphi[\underline{\tilde{x}}/\underline{x}])[\underline{x}]$.
In particular, for each closed formula~$\varphi$ of ZF, we have:
$$\left\|\varphi\right\|~\vdash_{\Sigma}~\left\|\varphi\right\|^{\K}
\qquad\text{and}\qquad
\left\|\varphi\right\|^{\K}~\vdash_{\Sigma}~\left\|\varphi\right\|\,.$$
So that classical realizability models of ZF are also equivalent to some instances of our construction.

\section{Models of $\mathbf{(I)ZF}$ in a class of toposes}\label{s:ModTopos}
In any elementary topos $\mathcal{E}$ one can interpret first-order languages using the doctrine of subobjects. In particular, an interpretation of the language of $\mathbf{(I)ZF}$ is given by an object $V$ of $\mathcal{E}$ which interprets the universe of sets and by a subobject $\varepsilon$ of $V\times V$ which interprets the membership relation. Equality is always interpreted as the diagonal subobject of $V\times V$. If all axioms of $\mathbf{(I)ZF}$ are validated by the interpretation (that is, if every axiom is interpreted as the maximum subobject) then we get a model of $\mathbf{IZF}$ which is in fact a model of $\mathbf{ZF}$ when $\mathcal{E}$ is boolean.

When the topos $\mathcal{E}$ is obtained as the result of the tripos-to-topos construction from a tripos $\mathsf{P}$, the internal logic of $\mathcal{E}$ can be reduced to the logic of the tripos $\mathsf{P}$ as explained in detail in \cite{VOO08}. Indeed, in this case the objects of $\mathcal{E}$ are pairs $(A,\rho)$ where $A$ is an object of the domain of $\mathsf{P}$ and $\rho\in \mathsf{P}(A\times A)$ is a partial equivalence relation on $A$ with respect to the logic of $\mathsf{P}$, and the subobjects of $(A,\rho)$ correspond to the predicates $\psi\in \mathsf{P}(A)$ which respect the relation $\rho$. Such correspondence extends to a correspondence of connectives and quantifiers between the logics of $\mathsf{P}$ and $\mathcal{E}$.  

The implicative model we produced in the previous section using an implicative algebra $\mathbb{A}$ can be seen as a model of $\mathbf{(I)ZF}$ in the corresponding implicative tripos $\mathsf{P}_{\mathbb{A}}$. Moreover, since $\mathbf{W}$ is a set and $[=_{\mathbf{W}}]$ is an equivalence relation on $\mathbf{W}$ with respect to the logic of $\mathsf{P}_{\mathbb{A}}$ by Lemma \ref{not} (i),(iii),(vi), we have that the pair $(\mathbf{W},[=_{\mathbf{W}}])$ is an object of the topos $\mathbf{Set}[\mathsf{P}_{\mathbb{A}}]$. Finally, as a consequence of Lemma \ref{not} (iv),(v), the relation $[\in_{\mathbf{W}}]$ gives rise to a subobject $\varepsilon_{\mathbf{W}}$ of $(\mathbf{W},[=_{\mathbf{W}}])\times (\mathbf{W},[=_{\mathbf{W}}])$ in $\mathbf{Set}[\mathsf{P}_{\mathbb{A}}]$. By reducing the internal logic of $\mathbf{Set}[\mathsf{P}_{\mathbb{A}}]$ to that of $\mathsf{P}_{\mathbb{A}}$ one obtains that the object $(\mathbf{W},[=_{\mathbf{W}}])$ and the subobject $\varepsilon_{\mathbf{W}}$ define an interpretation of the language of $\mathbf{(I)ZF}$ in $\mathbf{Set}[\mathsf{P}_{\mathbb{A}}]$ which is a model of set theory in there. We thus have proved the following:
\begin{theorem} Every topos $\mathcal{E}$ obtained from an implicative tripos by means of the tripos-to-topos construction from an implicative algebra $\mathbb{A}=(A,\leq,\rightarrow,\Sigma)$ such that $|A|<\kappa$ for some strongly inaccessible cardinal~$\kappa$ hosts a model of $\mathbf{IZF}$. If $\mathbb{A}$ is classical, then $\mathcal{E}$ hosts a model of $\mathbf{ZF}$.
\end{theorem}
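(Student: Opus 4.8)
The plan is to exhibit, inside the topos $\mathcal{E}=\mathbf{Set}[\mathsf{P}_{\mathbb{A}}]$, an explicit object $V$ together with a subobject $\varepsilon\subseteq V\times V$ interpreting membership, and then to argue that this interpretation validates every axiom of $\mathbf{(I)ZF}$ by transporting the validity results already established for the set-sized model $\mathbf{W}$. Concretely, I would take $V:=(\mathbf{W},[=_{\mathbf{W}}])$ and $\varepsilon:=\varepsilon_{\mathbf{W}}$, with equality interpreted (as always) by the diagonal.

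First I would check that these data are well-formed in $\mathcal{E}$. For $V$ to be an object of $\mathbf{Set}[\mathsf{P}_{\mathbb{A}}]$, the predicate $[=_{\mathbf{W}}]\in\mathsf{P}_{\mathbb{A}}(\mathbf{W}\times\mathbf{W})$ must be a partial equivalence relation with respect to the logic of the tripos; reflexivity, symmetry and transitivity are supplied respectively by Lemma~\ref{not}(i), (iii) and (vi), so $V$ is a genuine object (in fact a total equivalence relation). For $\varepsilon$ to define a subobject of $V\times V$, the predicate $[\in_{\mathbf{W}}]$ must respect $[=_{\mathbf{W}}]$ in each of its two arguments, which is exactly the content of Lemma~\ref{not}(iv) and (v). Hence $(V,\varepsilon)$ is a legitimate interpretation of the language of $\mathbf{(I)ZF}$ in $\mathcal{E}$.

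Next I would transfer validity of the axioms. The key ingredient is the reduction, recalled in Section~\ref{s:ModTopos} and detailed in \cite{VOO08}, of the internal logic of $\mathbf{Set}[\mathsf{P}_{\mathbb{A}}]$ to the logic of $\mathsf{P}_{\mathbb{A}}$: under this reduction the subobject of $V^n$ obtained by interpreting a formula $\varphi[\underline{x}]$ internally over $V$ corresponds, up to the equivalence $\equiv_{\Sigma[\mathbf{W}^n]}$, to the tripos-predicate $\|\varphi[\underline{x}]\|$ defined combinatorially in Section~\ref{s:ImpModels}. This holds because the recursive clauses defining $\|-\|$ match, clause by clause, the subobject-doctrine interpretation of the connectives and quantifiers, once the unbounded quantifiers $\lar{\forall}_{\beta\in\mathbf{W}}$ and $\lar{\exists}_{\beta\in\mathbf{W}}$ are recognised as the right and left adjoints to reindexing along the projection $\mathbf{W}^{n+1}\to\mathbf{W}^{n}$ in $\mathsf{P}_{\mathbb{A}}$. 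Granting this correspondence, an axiom $\varphi$ is interpreted as the maximal subobject of the terminal object in $\mathcal{E}$ precisely when $\|\varphi\|\in\Sigma$, that is, exactly when $\mathbf{W}\vDash\varphi$. Since the validity section established $\mathbf{W}\vDash\varphi$ for every axiom $\varphi$ of $\mathbf{(I)ZF}$, every such axiom holds in $\mathcal{E}$, so $(V,\varepsilon)$ is an internal model of $\mathbf{IZF}$.

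Finally, for the classical case, when $\mathbb{A}$ is classical the separator contains Peirce's law, so the tripos $\mathsf{P}_{\mathbb{A}}$ is Boolean and consequently $\mathcal{E}$ is a Boolean topos; by Lemma~\ref{intlog} the interpretation then soundly validates classical first-order logic with equality, and the very same axioms now yield a model of $\mathbf{ZF}$. I expect the main obstacle to be not any single axiom but the faithful matching between the hand-crafted interpretation $\|-\|$ and the canonical interpretation carried by the subobject doctrine of $\mathbf{Set}[\mathsf{P}_{\mathbb{A}}]$; in particular one must verify carefully that the second-order encoding of $\lar{\exists}$, and the encoding of equality via $\subseteq_{\mathbf{W}}$, compute the genuine existential images and the genuine quotient equality in the topos, which is where the bulk of the verification effort lies.
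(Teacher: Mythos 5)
Your proposal is correct and follows essentially the same route as the paper: both take $V:=(\mathbf{W},[=_{\mathbf{W}}])$ and $\varepsilon:=\varepsilon_{\mathbf{W}}$, justify well-formedness by Lemma~\ref{not} (i),(iii),(vi) and (iv),(v) respectively, and transfer validity of the axioms through the reduction of the internal logic of $\mathbf{Set}[\mathsf{P}_{\mathbb{A}}]$ to the logic of the tripos as in \cite{VOO08}. Your closing remark correctly locates the remaining verification burden (matching $\lar{\exists}$ and the coded equality against the subobject-doctrine interpretation), which the paper likewise delegates to the tripos-to-topos correspondence.
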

Now, as a consequence of Theorem \ref{teomiq}, we obtain the following:
\begin{corollary} If for every cardinal $\kappa'$ there exists a strongly inaccessible cardinal $\kappa$ such that $\kappa'<\kappa$, then every topos obtained from a $\mathbf{Set}$-based tripos by means of the tripos-to-topos construction hosts a model of $\mathbf{IZF}$ (which is a model of $\mathbf{ZF}$ when the topos is boolean). 
\end{corollary}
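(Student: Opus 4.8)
The plan is to reduce the Corollary to the Theorem just proved, by replacing an arbitrary $\mathbf{Set}$-based tripos with an implicative one via Theorem~\ref{teomiq}, and then using the large-cardinal hypothesis to supply the cardinality bound required by that Theorem.

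First I would fix an arbitrary $\mathbf{Set}$-based tripos $\mathsf{P}:\mathbf{Set}^{op}\rightarrow\mathbf{Heyt}$ and form the topos $\mathbf{Set}[\mathsf{P}]$. By Theorem~\ref{teomiq} there is an implicative algebra $\mathbb{A}=(A,\leq,\rightarrow,\Sigma)$ with $\mathsf{P}\cong\mathsf{P}_{\mathbb{A}}$. Since the tripos-to-topos construction sends isomorphic triposes to equivalent toposes, $\mathbf{Set}[\mathsf{P}]$ is equivalent to $\mathbf{Set}[\mathsf{P}_{\mathbb{A}}]$. The property of hosting a model of set theory is invariant under such an equivalence, because the interpretation is given purely by an object, a subobject of its square, and the validity of the axioms in the internal logic, all of which are preserved and reflected by an equivalence of toposes. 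Hence it suffices to produce the model inside $\mathbf{Set}[\mathsf{P}_{\mathbb{A}}]$.

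Next I would discharge the cardinality assumption of the preceding Theorem. Setting $\kappa':=|A|$, the hypothesis of the Corollary provides a strongly inaccessible cardinal $\kappa$ with $\kappa'<\kappa$, that is $|A|<\kappa$. This is exactly the condition under which the preceding Theorem applies to $\mathbb{A}$, and it yields a model of $\mathbf{IZF}$ inside $\mathbf{Set}[\mathsf{P}_{\mathbb{A}}]\simeq\mathbf{Set}[\mathsf{P}]$, which settles the intuitionistic part.

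For the classical part, I would show that when $\mathbf{Set}[\mathsf{P}]$ is boolean the algebra $\mathbb{A}$ may be taken to be classical, so that the $\mathbf{ZF}$ clause of the preceding Theorem applies. Concretely, booleanness of $\mathbf{Set}[\mathsf{P}]$ is equivalent to the internal logic of $\mathsf{P}$ being classical (i.e.\ to the validity of Peirce's law on the generic predicate); transporting this along $\mathsf{P}\cong\mathsf{P}_{\mathbb{A}}$ makes the logic of $\mathsf{P}_{\mathbb{A}}$ classical, which by the characterisation of \cite{miq1} amounts precisely to $\bigwedge_{a,b\in A}(((a\rightarrow b)\rightarrow a)\rightarrow a)\in\Sigma$, i.e.\ to $\mathbb{A}$ being classical. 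The main obstacle is exactly this last paragraph: one must carefully establish the chain of equivalences between booleanness of the topos, classicality of the tripos logic, and classicality of the implicative algebra, and verify that the isomorphism of Theorem~\ref{teomiq} transports this property; everything else is routine bookkeeping about cardinals and invariance under equivalence.
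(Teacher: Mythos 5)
Your proposal is correct and follows essentially the same route as the paper, which derives the corollary directly from Theorem~\ref{teomiq} together with the preceding theorem, using the large-cardinal hypothesis to supply the bound $|A|<\kappa$. The only content the paper leaves implicit and you spell out is the invariance of ``hosting a model'' under equivalence of toposes and the chain of equivalences between booleanness of $\mathbf{Set}[\mathsf{P}]$, classicality of the tripos, and classicality of $\mathbb{A}$; both points are standard and your treatment of them is sound.
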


\subsubsection*{Acknowledgements} The authors would like to acknowledge T. Streicher and F. Ciraulo for useful discussions.

\bibliographystyle{./entics}
\bibliography{biblioPAS}
\end{document}